\newif\ifreport\reporttrue
\newtheorem{definition}{Definition}
\newtheorem{remark}{Remark}
\newtheorem{theorem}{Theorem}
\newtheorem{lemma}[theorem]{Lemma}
\newtheorem{proposition}[theorem]{Proposition}
\newtheorem{corollary}[theorem]{Corollary}
\begin{document}
\pagenumbering{arabic}
\title{Optimal Sampling and Scheduling for Timely Status Updates in Multi-source Networks}  



\author{\large Ahmed M. Bedewy, Yin Sun, \emph{Member, IEEE}, Sastry Kompella, \emph{Senior Member, IEEE}, and Ness B. Shroff, \emph{Fellow, IEEE}
\thanks{This paper was presented in part at ACM MobiHoc 2019 \cite{multi_source_bedewy_2}.}
\thanks {This work has been supported in part by ONR grants N00014-17-1-2417 and N00014-15-1-2166, Army Research Office grants W911NF-14-1-0368 and MURI W911NF-12-1-0385,  National Science Foundation grants CNS-1446582, CNS-1421576, CNS-1518829, and CCF-1813050, and a grant from the Defense Thrust Reduction Agency HDTRA1-14-1-0058.}
\thanks{A. M. Bedewy is with the  Department  of  ECE,  The  Ohio  State  University, Columbus, OH 43210 USA (e-mail:  bedewy.2@osu.edu).}
\thanks{Y.  Sun  is  with  the  Department  of  ECE,  Auburn  University,  Auburn,  AL 36849 USA (e-mail:  yzs0078@auburn.edu).}
\thanks{S.  Kompella  is  with  Information Technology Division, Naval Research Laboratory,  Washington, DC 20375 USA  (e-mail:  sk@ieee.org).}
\thanks{N. B.  Shroff  is  with  the  Department  of  ECE and  the  Department  of  CSE, The Ohio State University,  Columbus, OH 43210 USA  (e-mail:  shroff.11@osu.edu).}
}

%
%
\maketitle

\begin{abstract}

We consider a joint sampling and scheduling problem for optimizing data freshness in multi-source systems. Data freshness is measured by a non-decreasing penalty function of \emph{age of information}, where all sources have the same age-penalty function. Sources take turns to generate update packets, and forward them to their destinations one-by-one through a shared channel with random delay. There is a scheduler, that chooses the update order of the sources, and a sampler, that determines when a source should generate a new packet in its turn. We aim to find the optimal scheduler-sampler pairs that minimize the total-average age-penalty at delivery times (Ta-APD) and the total-average age-penalty (Ta-AP). We prove that the Maximum Age First (MAF) scheduler and the zero-wait sampler are jointly optimal for minimizing the Ta-APD. Meanwhile, the MAF scheduler and a relative value iteration with reduced complexity (RVI-RC) sampler are jointly optimal for minimizing the Ta-AP. The RVI-RC  sampler is based on a relative value iteration algorithm whose complexity is reduced by exploiting a threshold property in the optimal sampler. Finally, a low-complexity threshold-type sampler is devised via an approximate analysis of Bellman’s equation. This threshold-type sampler reduces to a simple water-filling sampler for a linear age-penalty function.
 
\end{abstract}
\newpage
\section{Introduction}\label{Int}
In recent years, significant attention has been paid to \emph{age of information} as a metric for data freshness. This is because there are a growing number of applications that require timely status updates in various networked monitoring and control systems. Examples include  sensor and environment  monitoring networks, surrounding monitoring autonomous vehicles, smart grid systems, etc. Age of information, or simply age, was introduced in \cite{adelberg1995applying,cho2000synchronizing,golab2009scheduling,KaulYatesGruteser-Infocom2012}, which is the time elapsed since the most recently received update was generated. Unlike traditional packet-based metrics, such as throughput and delay, age is a destination-based metric that captures the information lag at the destination, and is hence more apt for achieving the goal of timely updates.

There have been two major lines of research on age in single source networks: One direction is on systems with a stochastic arrival process. There are results on both queueing-based age analysis \cite{KaulYatesGruteser-Infocom2012,2012CISS-KaulYatesGruteser,CostaCodreanuEphremides_TIT,KamKompellaEphremidesTIT} and sample-path based age optimization \cite{age_optimality_multi_server,Bedewy_NBU_journal_2,multihop_optimal,bedewy2017age_multihop_journal_2}. The second direction is for the case that the packet arrival process is designable \cite{2015ISITYates,BacinogCeranUysal_Biyikoglu2015ITA,SunJournal2016,sun2018sampling_2}, where our study extends the findings in these studies to multi-source networks.

We consider random, yet discrete, transmission times such that a packet has to be processed for a random period before delivered to the destination. In practice, such random transmission times occur in many applications, such as autonomous vehicles. In particular, there are many electronic control units (ECUs) in a vehicle, that are connected to one or more sensors and actuators via a controller area network (CAN) bus \cite{ran2010design,johansson2005vehicle}. These ECUs are given different priority, based on their criticality level (e.g., ECUs in the powertrain have a higher priority compared to those connected to infotainment systems). Since high priority packets usually have hard deadlines, the transmissions of low priority packets are interrupted whenever the higher priority ones are transmitted. Therefore, information packets with lower priority see a time-varying bandwidth, and hence encounter a random transmission time. 

Another example is the wireless sensor networks that are used for environmental monitoring, human-related activities, etc. In such networks, sensor nodes may be deployed in remote areas and information is gathered from these sensors by an access point (AP) through a shared wireless channel \cite{kandris2020applications}. Since this channel is influenced by uncertain factors, the channel delay varies with time.

When the transmission time is highly random, one can observe an interesting phenomenon: it is not necessarily optimal to generate a new packet as soon as the channel becomes available. 
This phenomenon was revealed in \cite{2015ISITYates} and further explored in \cite{SunJournal2016} and \cite{sun2018sampling_2}. In the case of autonomous vehicles, many sensors may share the same CAN bus. As a result, the decision maker needs to control both the sampling times and service order of these sensors. The same observations are also applied to wireless sensor networks.

 \begin{figure}
\includegraphics[scale=0.4]{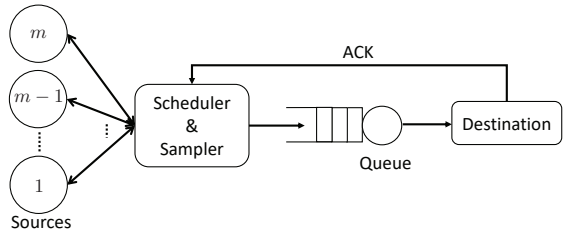}
\centering
\caption{System model}\label{Fig:sys_model}
\end{figure}
 In this paper, our goal is to investigate timely status updates in multi-source systems with random transmission times, as depicted in Fig. \ref{Fig:sys_model}. Sources take turns to generate update packets, and forward the packets to their destinations one-by-one through a shared channel with random delay. This results in a joint design problem of scheduling and sampling, where the scheduler chooses the update order of the sources, and the sampler determines when a source should generate a new packet in its turn. We find that it is optimal to first serve the source with the highest age, and, similar to the single-user case, it is not always optimal to generate packets as soon as the channel becomes available. To that end, the main contributions of this paper are outlined as follows: 
\begin{itemize}
\item We formulate the optimal scheduling and sampling  problem to optimize data freshness in single-hop, multi-source networks. We use a non-decreasing age-penalty function to represent the level of dissatisfaction of data staleness, where all sources have the same age-penalty function. We focus on minimizing the total-average age-penalty at delivery times (Ta-APD) and the total-average age-penalty (Ta-AP), where Ta-AP is more challenging to minimize. We show that our optimization problem has an important \emph{separation principle}: For any given sampler, we show that the optimal scheduling policy is the Maximum Age First (MAF) scheduler (Proposition \ref{Thm1}). Hence, the optimal scheduler-sampler pair can be obtained by fixing the scheduling policy to the MAF scheduler, and then optimize the sampler design separately.



\item We show that  the MAF scheduler and zero-wait sampler, in which a new packet is generated once the channel becomes idle, are jointly optimal for minimizing the Ta-APD (Theorem \ref{thm_tapa}). We show this result by proving the optimality of the zero-wait sampler for minimizing the Ta-APD, when the scheduling policy is fixed to the MAF scheduler. 

%

\item Interestingly, we find that zero-wait sampler does not always minimize the Ta-AP, when the MAF scheduler is employed. We show that the MAF scheduler and the relative value iteration with reduced complexity (RVI-RC) sampler are jointly optimal for minimizing the Ta-AP (Theorem \ref{thm_taa}). We take several steps to prove the optimality of the RVI-RC sampler: When the scheduling policy is fixed to the MAF scheduler, we reformulate the optimal sampling problem for minimizing the Ta-AP as an equivalent semi-Markov decision problem. We use Dynamic Programming (DP) to obtain the optimal sampler. In particular, we show that there exists a stationary deterministic sampler that can achieve optimality (Proposition \ref{thm2}). We also show that the optimal sampler has a threshold property (Proposition \ref{th_thm}), that helps in reducing the complexity of the relative value iteration (RVI) algorithm (by reducing the computations required for some system states). This results in the RVI-RC sampler in Algorithm \ref{alg1}.

\item Finally, in Section \ref{Bellman}, we devise a low-complexity threshold-type sampler via an approximate analysis of Bellman's equation whose solution is the RVI-RC sampler. In addition, for the special case of a linear age-penalty function, this threshold sampler is further simplified to the water-filling solution. The numerical results in Figs. \ref{avg_age_penalty_expo_service_prob}-\ref{avg_age_penalty_service_time}  indicate that, when the scheduler is fixed to the MAF, the performance of these approximated samplers is almost the same as that of the RVI-RC sampler. 
\end{itemize}

\section{Related Works}

Early studies have characterized the age in many interesting variants of queueing models, such as First-Come, First-Served (FCFS) \cite{2012ISIT-YatesKaul,KaulYatesGruteser-Infocom2012,KamKompellaEphremidesTIT,2015ISITHuangModiano}, Last-Come, First-Served (LCFS) with and
without preemption \cite{2012CISS-KaulYatesGruteser,RYatesTIT16}, and the queueing model with packet management \cite{CostaCodreanuEphremides_TIT, Icc2015Pappas}. The update packets in these studies arrive at the queue randomly according to a Poisson process. The work in \cite{age_optimality_multi_server,Bedewy_NBU_journal_2,multihop_optimal,bedewy2017age_multihop_journal_2} showed that Last-Generated, First-Served (LGFS)-type policies are optimal or near-optimal for minimizing a large class of age metrics in single flow multi-server and multi-hop networks.

%

Another line of research has considered the ``generate-at-will" model \cite{2015ISITYates,BacinogCeranUysal_Biyikoglu2015ITA,SunJournal2016,sun2018sampling_2}, in which the generation times (sampling times) of the update packets are controllable. The work in \cite{SunJournal2016,sun2018sampling_2} motivated the usage of nonlinear age functions from various real-time applications and designed sampling policies for optimizing nonlinear age functions in single source systems. 
Our study here extends the work in \cite{SunJournal2016,sun2018sampling_2} to a multi-source system. In this system, only one packet can be sent through the channel at a time. Therefore, a decision maker does not only consist of a sampler, but also a scheduler, which makes the problem even more challenging.

The scheduling problem for multi-source networks with different scenarios was considered in \cite{bedewy2019optimal,aphermedis_he2017optimal,kadota2016minimizing,kadota2016minimizing_journal,hsu2017scheduling,kadota2018optimizing,hsu2018age,li2013throughput,Yin_multiple_flows,yates2017status,talak2018distributed,talak2018optimizing,talak2018scheduling,talak2018optimizing2}.  In \cite{aphermedis_he2017optimal}, the authors found that the scheduling problem for minimizing the age in wireless networks  under physical interference constraints is NP-hard. Optimal scheduling for age minimization in a broadcast network was studied in \cite{kadota2016minimizing,kadota2016minimizing_journal,hsu2017scheduling,kadota2018optimizing,hsu2018age}, where a single source can be scheduled at a time. In addition, it was found that a maximum age first (MAF) service discipline is useful for reducing the age in various multi-source systems with different service time distributions in \cite{Yin_multiple_flows,li2013throughput,hsu2017scheduling,kadota2016minimizing,kadota2016minimizing_journal}. In contrast to our study, the generation of the update packets in \cite{Yin_multiple_flows,li2013throughput,aphermedis_he2017optimal,kadota2016minimizing,kadota2016minimizing_journal,hsu2017scheduling,kadota2018optimizing,hsu2018age} is uncontrollable and they arrive randomly at the transmitter. Age analysis of the status updates over  a multiaccess channel was considered in \cite{yates2017status}. The studies in \cite{talak2018distributed,talak2018optimizing,talak2018scheduling,talak2018optimizing2} considered the age optimization problem in a wireless network with general interference constraints and channel uncertainty. Our result in  Corollary \ref{zero_opt_time_slotted} suggests that if the packet transmission time is fixed as in time-slotted systems \cite{aphermedis_he2017optimal,kadota2016minimizing,kadota2016minimizing_journal,hsu2017scheduling,kadota2018optimizing,hsu2018age,li2013throughput,yates2017status,talak2018distributed,talak2018optimizing,talak2018scheduling,talak2018optimizing2}, then it is optimal to sample as soon as the channel becomes available. However, this is not necessarily true otherwise.

\section{Model and Formulation}\label{sysmod}
\subsection{Notations}
 We use $\mathbb{N}^+$ to represent the set of non-negative integers, $\mathbb{R}^+$ is the set of non-negative real numbers, $\mathbb{R}$ is the set of real numbers, and $\mathbb{R}^n$ is the set of $n$-dimensional real Euclidean space.  We use $t^-$ to denote the time instant just before $t$. Let $\mathbf{x}=(x_1,x_2,\ldots,x_n)$ and $\mathbf{y}=(y_1,y_2,\ldots,y_n)$ be two vectors in $\mathbb{R}^n$, then we denote $\mathbf{x}\leq\mathbf{y}$ if $x_i\leq y_i$ for $i=1,2,\ldots,n$. 
Also, we use $x_{[i]}$ to denote the $i$-th largest component of vector $\mathbf{x}$.

\subsection{System Model}
We consider a status update system with $m$ sources as shown in Fig. \ref{Fig:sys_model}, where each source observes a time-varying process. An update packet is generated from a source and is then sent over an error-free delay channel to the destination, where only one packet can be sent at a time. A decision maker controls the transmission order of the sources and the generation times of the update packets for each source. This is known as the ``generate-at-will'' model \cite{BacinogCeranUysal_Biyikoglu2015ITA,2015ISITYates,SunJournal2016} (i.e., the update packets can be generated at any time). 

We use $S_i$ to denote the generation time of the $i$-th generated packet from all sources, called packet $i$. Moreover, we use $r_i$ to represent the source index from which packet $i$ is generated. The channel is modeled as an FCFS queue with random \emph{i.i.d.} service time $Y_i$, where $Y_i$ represents the service time of packet $i$, $Y_i\in\mathcal{Y}$, and $\mathcal{Y}\subset\mathbb{R}^+$ is a finite and bounded set. We also assume that $0<\mathbb{E}[Y_i]<\infty$ for all $i$. We suppose that the decision maker knows the idle/busy state of the server through acknowledgments (ACKs) from the destination with zero delay. If an update packet is generated while the server is busy, this packet needs to wait in the queue until its transmission opportunity, and becomes stale while waiting. Hence, there is no loss of optimality to avoid generating an update packet during the busy periods. As a result,  a packet is served immediately once it is generated. Let $D_i$ denote the delivery time of packet $i$, where $D_i=S_i+Y_i$. After the delivery of packet $i$ at time $D_i$, the decision maker may insert a waiting time $Z_i$ before generating a new packet (hence, $S_{i+1}=D_i+Z_i$)\footnote{We suppose that $D_0=0$. Thus, we have $S_1=Z_0$.}, where $Z_i\in\mathcal{Z}$, and $\mathcal{Z}\subset\mathbb{R}^+$ is a finite and bounded set\footnote{We suppose that we always have $0\in\mathcal{Z}$.}.


\begin{figure}
\includegraphics[scale=0.25]{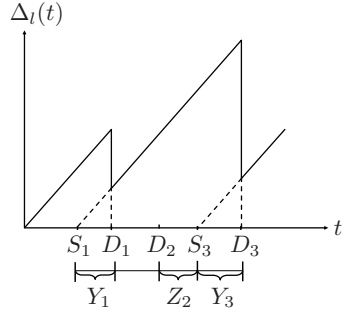}
\centering
\caption{The age $\Delta_l(t)$ of source $l$, where we suppose that the first and third packets are generated from source $l$, i.e., $r_1=r_3=l$.}\label{age_proc}
\end{figure}
 At any time $t$, the most recently delivered packet from source $l$ is generated at time
\begin{equation}
U_l(t)=\max\{S_i: r_i=l, D_i\leq t\}.
\end{equation}
\emph{Age of information}, or simply the \emph{age}, for source $l$ is defined as  \cite{adelberg1995applying,cho2000synchronizing,golab2009scheduling,KaulYatesGruteser-Infocom2012} 
\begin{equation}
\Delta_l(t)=t-U_l(t).
\end{equation} 
As shown in Fig. \ref{age_proc}, the age increases linearly with $t$ but is reset to a smaller value with the delivery of a fresher packet. We suppose that the age $\Delta_l(t)$ is right-continuous. The age process for source $l$ is given by $\{\Delta_l(t), t\geq 0\}$. We suppose that the initial age values $\Delta_l(0^-)$ for all $l$ are known to the system. For notation simplicity, we use $a_{li}$ to denote the age value of source $l$ at time $D_i$, i.e., $a_{li}=\Delta_l(D_i)$\footnote{Since the age process is right-continuous, if packet $i$ is delivered from source $l$, then $\Delta_l(D_i)$ is the age value of source $l$ just after the delivery time $D_i$.}

 For each source $l$, we consider an age-penalty function $g(\Delta_l(t))$ of the age $\Delta_l(t)$. The function $g : [0,\infty)\to \mathbb{R}$ is non-decreasing and is not necessarily convex or continuous. We suppose that $\mathbb{E}[\vert\int_{a}^{a+x}g(\tau)d\tau\vert]<\infty$ whenever $x<\infty$.  It was recently shown in \cite{sun2018sampling_2} that, under certain conditions, information freshness metrics expressed in terms of auto-correlation functions, the estimation error of signal values, and mutual information, are monotonic functions of the age. Moreover, the age-penalty function $g(\cdot)$ can be used to represent the level of dissatisfaction of data staleness in different applications based on their demands. For instance, a stair-shape function $g(x)=\lfloor x\rfloor$ can be used to characterize the dissatisfaction for data staleness when the information of interest is checked periodically,  an exponential function $g(x)=e^x$ can be utilized in online learning and control applications in which the demand for updating data increases quickly with age, and an indicator function $g(x)=\mathds{1}(x>q)$ can be used to indicate the dissatisfaction of the violation of an age limit $q$.

\subsection{Decision Policies}
A decision policy, denoted by $d$, controls the following: i) the scheduler, denoted by $\pi$, that determines the source to be served at each transmission opportunity  $\pi\triangleq (r_1, r_2, \ldots)$, ii) the sampler, denoted by $f$, that determines the packet generation times $f\triangleq (S_1, S_2, \ldots)$, or equivalently, the sequence of waiting times $f\triangleq (Z_0, Z_1, \ldots)$. Hence, $d=(\pi,f)$ implies that a decision policy $d$ employs the scheduler $\pi$ and the sampler $f$. Let $\mathcal{D}$ denote the set of causal decision policies in which decisions are made based on the history and current information of the system. Observe that  $\mathcal{D}$ consists of  $\Pi$ and $\mathcal{F}$, where $\Pi$ and $\mathcal{F}$ are the sets of causal schedulers and samplers, respectively.  

%

After each delivery, the decision maker chooses the source to be served, and imposes a waiting time before the generation of the new packet. Next, we present our optimization problems.  
\subsection{Optimization Problem}
 We define two metrics to assess the long term age performance over our status update system in \eqref{peak_age_def} and \eqref{avg_age_def}. Consider the time interval $[0, D_n]$. For any decision policy $d=(\pi,f)$, we define the total-average age-penalty at delivery times (Ta-APD) as
\begin{equation}\label{peak_age_def}
\Delta_{\text{avg-D}}(\pi,f)=\limsup_{n\rightarrow\infty}\frac{1}{n}\mathbb{E}\left[\sum_{l=1}^m\sum_{i=1}^{n}g\left(\Delta_{l}(D_i^{-})\right)\right],
\end{equation}
and the total-average age-penalty per unit time (Ta-AP) as
\begin{equation}\label{avg_age_def}
\Delta_{\text{avg}}(\pi,f)=\limsup_{n\rightarrow\infty}\frac{\mathbb{E}\left[\sum_{l=1}^{m}\int_{0}^{D_n}g\left(\Delta_l(t)\right)dt\right]}{\mathbb{E}\left[D_n\right]}.
\end{equation}
In this paper, we aim to minimize both the Ta-APD and the Ta-AP separately. In other words, we seek a decision policy $d=(\pi,f)$ that solves the following optimization problems:
\begin{equation}\label{optimal_eq_p}
\bar{\Delta}_{\text{avg-D-opt}}\triangleq\min_{\pi \in \Pi, f\in\mathcal{F}}\Delta_{\text{avg-D}}(\pi,f),
\end{equation} 
and
\begin{equation}\label{optimal_eq}
\bar{\Delta}_{\text{avg-opt}}\triangleq\min_{\pi \in \Pi, f\in\mathcal{F}}\Delta_{\text{avg}}(\pi,f),
\end{equation}
where $\bar{\Delta}_{\text{avg-D-opt}}$ and $\bar{\Delta}_{\text{avg-opt}}$  are the optimum objective values of Problems \eqref{optimal_eq_p} and \eqref{optimal_eq}, respectively. Due to the large decision policy space, the optimization problem is quite challenging. In particular, we need to seek the optimal decision policy that controls both the scheduler and sampler to minimize the Ta-APD and the Ta-AP. In the next section, we discuss our approach to tackle these optimization problems.

\section{Optimal Decision Policy}\label{optimal_policies}
We first show that our optimization problems in \eqref{optimal_eq_p} and \eqref{optimal_eq} have  an important separation principle: Given the generation times of the update packets, the Maximum Age First (MAF) scheduler provides the best age performance compared to any other scheduler. What remains to be addressed is the question of finding the best sampler that solves Problems  \eqref{optimal_eq_p} and \eqref{optimal_eq}, given that the scheduler is fixed to the MAF. Next, we present our approach to solve our optimization problems in detail. 
\subsection{Optimal Scheduler}
We start by defining the MAF scheduler as follows:
\begin{definition}[\cite{li2013throughput,hsu2017scheduling,kadota2016minimizing,kadota2016minimizing_journal,Yin_multiple_flows}]
Maximum Age First scheduler: In this scheduler, the source with the maximum age is served first among all sources. Ties are broken arbitrarily.
\end{definition}
 For simplicity, let $\pi_{\text{MAF}}$ represent the MAF scheduler. The age performance of $\pi_{\text{MAF}}$ scheduler is characterized in the following proposition:
\begin{proposition}\label{Thm1}
For all $f\in\mathcal{F}$
\begin{equation}\label{Thm1eq1b}
\Delta_{\text{\emph{avg-D}}}(\pi_{\text{MAF}},f)=\min_{\pi\in\Pi} \Delta_{\text{\emph{avg-D}}}(\pi,f),
\end{equation}
\begin{equation}\label{Thm1eq2b}
\Delta_{\text{\emph{avg}}}(\pi_{\text{MAF}},f)=\min_{\pi\in\Pi} \Delta_{\text{\emph{avg}}}(\pi,f).
\end{equation}
That is, the MAF scheduler minimizes both the Ta-APD and the Ta-AP in \eqref{peak_age_def} and \eqref{avg_age_def} among all schedulers in $\Pi$. 
\end{proposition}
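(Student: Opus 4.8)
The plan is a sample-path coupling argument together with an interchange (re-sorting) lemma. Fix a sampler $f\in\mathcal{F}$ and an arbitrary scheduler $\pi\in\Pi$; the goal is to compare $(\pi,f)$ with $(\pi_{\text{MAF}},f)$. First I would couple the two systems to run on a common sample path: take the service-time sequence $\{Y_i\}$ to be the same in both, and observe that, since a packet is served immediately upon generation and the penalty $g$ is common to all $m$ sources, the generation epochs $\{S_i\}$ and the delivery epochs $\{D_i\}=\{S_i+Y_i\}$ can be taken identical in the two systems, so that they differ only in the source label $r_i$ assigned to packet $i$. In particular $D_n$ is common, which is what makes the normalizations in \eqref{peak_age_def} and \eqref{avg_age_def} (the factor $1/n$ and the denominator $\mathbb{E}[D_n]$, respectively) match across the two systems.

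The heart of the argument is a pathwise domination of the \emph{sorted} age vectors. I would show, by induction on the delivery index $i$, that on this coupled path
\begin{equation}\label{pathdom}
\Delta^{\pi_{\text{MAF}}}_{[j]}(t)\ \le\ \Delta^{\pi}_{[j]}(t)\qquad\text{for all }t\ge 0\text{ and all }j\in\{1,\ldots,m\}.
\end{equation}
Here $\Delta^{p}_{[j]}(t)$ denotes the $j$-th largest coordinate of the age vector $(\Delta^{p}_1(t),\ldots,\Delta^{p}_m(t))$ under policy $p$. The base case holds because the initial ages $\Delta_l(0^-)$ are common, so the sorted vectors agree on $[0,D_1)$. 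For the inductive step, on any interval $(D_i,D_{i+1})$ every age grows at unit rate, which preserves both the internal ordering of each vector and the inequality \eqref{pathdom}; at the epoch $D_i$ packet $i$ resets the age of source $r_i$ to the common value $Y_i$, with MAF resetting the source of \emph{largest} age and $\pi$ resetting an arbitrary source. Thus the inductive step reduces to an elementary lemma on sorted vectors, which I would state and prove separately: if $\mathbf{a},\mathbf{b}\in\mathbb{R}^m$ satisfy $a_{[j]}\le b_{[j]}$ for all $j$, and $\mathbf{a}'$ is obtained from $\mathbf{a}$ by replacing a largest component by some $c\ge 0$ while $\mathbf{b}'$ is obtained from $\mathbf{b}$ by replacing an \emph{arbitrary} component by the same $c$, then $a'_{[j]}\le b'_{[j]}$ for all $j$ (intuitively, replacing a largest entry is the move that makes the new sorted vector smallest). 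Applying this lemma at each $D_i$, with $\mathbf{a}$ and $\mathbf{b}$ the age vectors just before $D_i$ under MAF and under $\pi$ and with $c=Y_i$, advances the induction.

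Once \eqref{pathdom} is in hand, the conclusion follows from monotonicity of $g$. Since $g$ is non-decreasing and common to all sources, the instantaneous total penalty depends on the age vector only through its sorted coordinates and is non-decreasing in each of them: $\sum_{l=1}^{m} g(\Delta_l(t)) = \sum_{j=1}^{m} g(\Delta_{[j]}(t))$. Hence \eqref{pathdom} gives $\sum_{l} g(\Delta^{\pi_{\text{MAF}}}_l(t)) \le \sum_{l} g(\Delta^{\pi}_l(t))$ for every $t$ on the coupled path. Evaluating at $t=D_i^-$, summing over $i=1,\ldots,n$, dividing by $n$, taking expectations and then $\limsup_{n\to\infty}$ yields \eqref{Thm1eq1b}; integrating over $t\in[0,D_n]$, taking expectations, dividing by $\mathbb{E}[D_n]$ and then $\limsup_{n\to\infty}$ yields \eqref{Thm1eq2b}. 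Since $\pi$ was arbitrary and $\pi_{\text{MAF}}\in\Pi$, the reverse inequalities are trivial, so both relations in the statement are equalities.

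I expect the main obstacles to be two. The first is the sorting lemma invoked in the inductive step: it is the only genuine inequality in the proof, and a careful argument is needed for general $m$ and general $c$, including the case where $c$ exceeds every current age. The second, more delicate, point is the reduction to a common generation/delivery sequence in the first paragraph: a causal sampler may in principle let its waiting times depend on the scheduler's past label choices, so one must invoke the symmetry of the problem across sources — they all share the penalty $g$ — to argue that forcing the two systems to share generation epochs entails no loss when comparing $(\pi_{\text{MAF}},f)$ with $(\pi,f)$.
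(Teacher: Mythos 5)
Your proposal is correct and follows essentially the same route as the paper's proof in Appendix A: couple the service (hence generation and delivery) times across schedulers, prove pathwise dominance of the sorted age vectors by induction over delivery epochs via exactly the re-sorting lemma you state (the paper's ``Inductive Comparison'' Lemma), and conclude by monotonicity of $g$. The only cosmetic difference is that the paper phrases the final step through stochastic ordering of the age processes rather than taking expectations directly on the coupled path.
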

\begin{proof}
One of the key ideas of the proof is as follows: Given any sampler, that controls the generation times of the update packets, we only control from which source a packet is generated. We couple the policies such that the packet delivery times are fixed under all decision policies. In the MAF scheduler, a source with maximum age becomes the source with minimum age among the $m$ sources after each delivery. Under any arbitrary scheduler, a packet can be generated from any source, which is not necessarily the one with the maximum age, and the chosen source becomes the one with minimum age among the $m$ sources after the delivery. Since the age-penalty function, $g(\cdot)$, is non-decreasing, the MAF scheduler provides a better age performance compared to any other scheduler. For details, see Appendix~\ref{Appendix_A}.
\end{proof}
Proposition \ref{Thm1} is proven by using a sample-path proof technique that was recently developed in \cite{Yin_multiple_flows}. The difference is that the authors  in \cite{Yin_multiple_flows} proved the results for symmetrical packet generation and arrival processes, while  we consider here that the packet generation times are controllable. 
We found that the same proof technique applies to both cases. Observe that, Proposition 1 holds when all sources are equally prioritized. However, for the sources with different priorities (i.e. different age-penalty functions), this result does not hold anymore. This is because the order of the age-penalty values of various sources may change with time.


Proposition \ref{Thm1} 
helps us conclude the separation principle that the optimal sampler can be optimized separately, given that the scheduling policy is fixed to the MAF scheduler. Hence, the optimization problems \eqref{optimal_eq_p} and \eqref{optimal_eq} reduce to the following:
\begin{equation}\label{optimal_eq2'}
\bar{\Delta}_{\text{avg-D-opt}}\triangleq\min_{f\in\mathcal{F}}\Delta_{\text{avg-D}}(\pi_{\text{MAF}},f),
\end{equation}
\begin{equation}\label{optimal_eq2}
\bar{\Delta}_{\text{avg-opt}}\triangleq\min_{f\in\mathcal{F}}\Delta_{\text{avg}}(\pi_{\text{MAF}},f).
\end{equation}

\begin{figure*}[!tbp]
 \centering
 \subfigure[The age evolution of source 1.]{
  \includegraphics[scale=0.25]{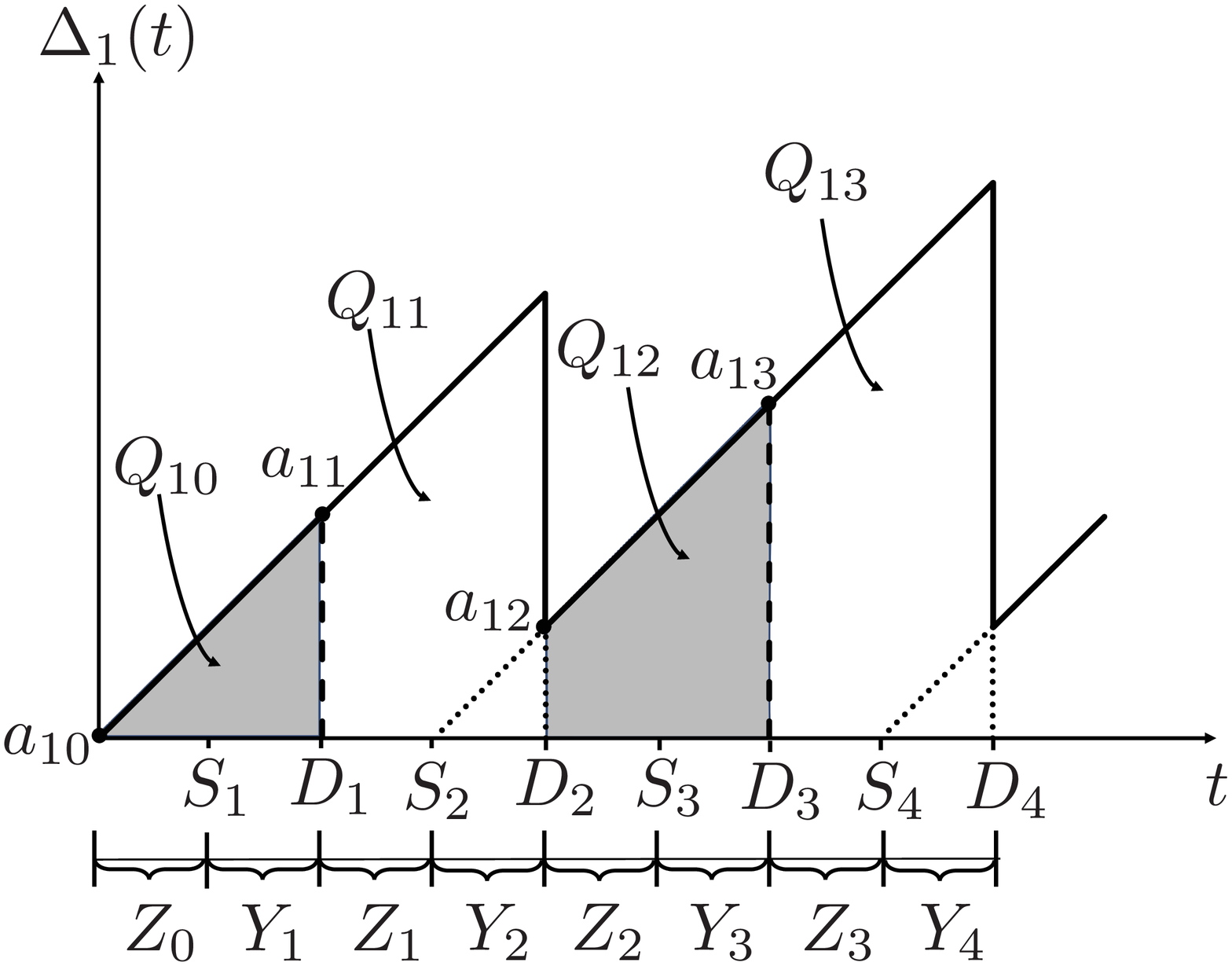}
   \label{a}
   }
 \subfigure[The age evolution of source 2.]{
  \includegraphics[scale=0.25]{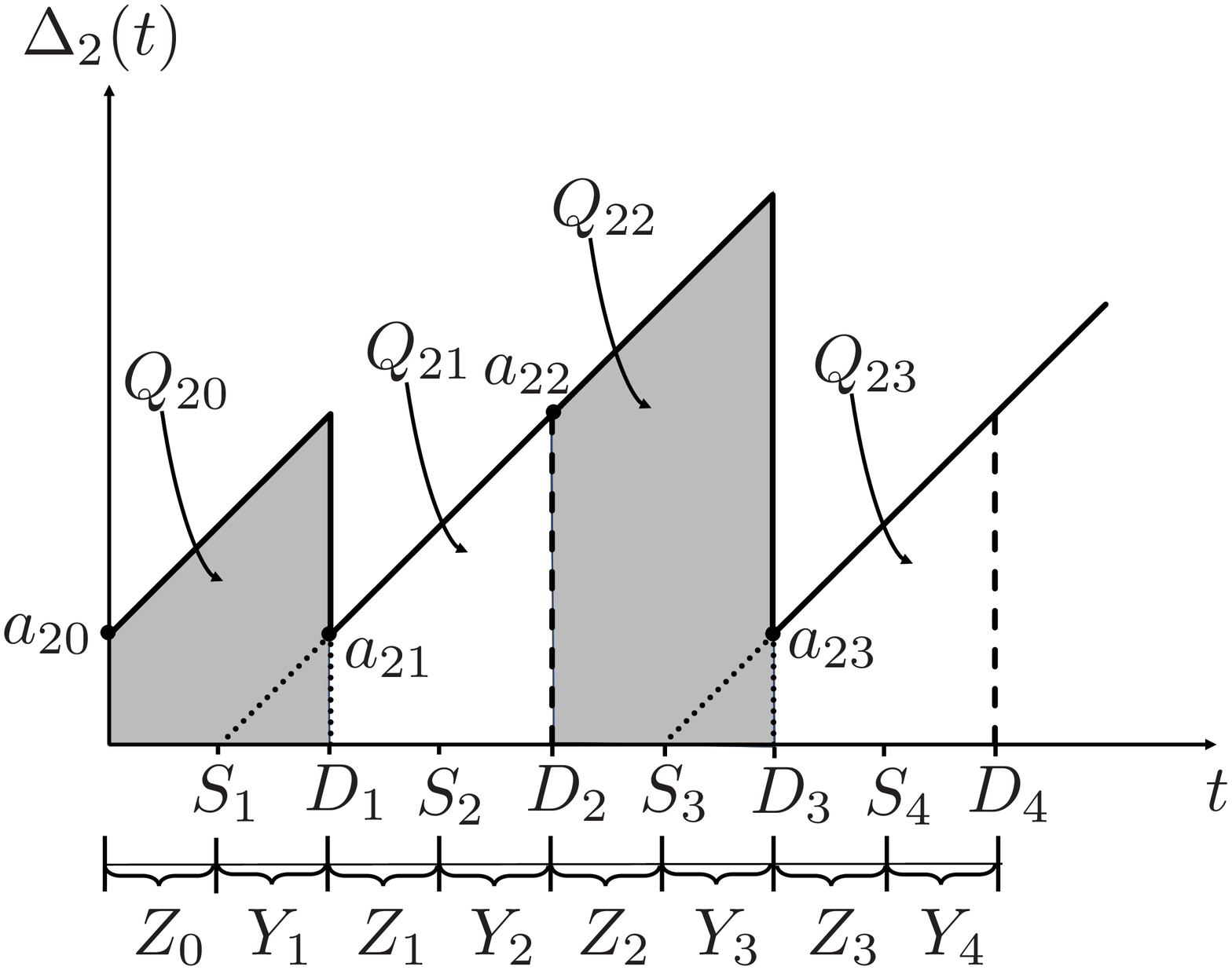}
   \label{b}
   }
   \caption{The age processes evolution of the MAF scheduler in a two-sources information update system. Source 2 has a higher initial age than Source 1. Thus, Source 2 starts service and Packet 1 is generated from Source 2, which is delivered at time $D_1$. Then, Source 1 is served and Packet 2 is generated from Source 1, which is delivered at time $D_2$. The same operation is repeated over time.}
   \label{fig:ages_evolv}
\end{figure*}
By fixing the scheduling policy to the MAF scheduler, the evolution of the age processes of the sources is as follows: The sampler may impose a waiting time $Z_i$ before generating packet $i+1$ at time $S_{i+1}=D_i+Z_i$ from the source with the maximum age at time $t=D_i$. Packet $i+1$ is delivered at time $D_{i+1}=S_{i+1}+Y_{i+1}$ and the age of the source with maximum age drops to the minimum age with the value of $Y_{i+1}$, while the age processes of other sources increase linearly with time without change. This operation is repeated with time and the age processes evolve accordingly. An example of age processes evolution is shown in Fig. \ref{fig:ages_evolv}. Next, we seek the optimal sampler for Problems \eqref{optimal_eq2'} and \eqref{optimal_eq2}. 
\subsection{Optimal Sampler for Problem \eqref{optimal_eq2'}}
Now, we show that the MAF scheduler and the zero-wait sampler are jointly optimal for minimizing the Ta-APD as follows:

\begin{theorem}\label{thm_tapa}
The MAF scheduler and the zero-wait sampler form an optimal solution for Problem \eqref{optimal_eq_p}.
\end{theorem}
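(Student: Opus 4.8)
The plan is to combine Proposition~\ref{Thm1} with a sample-path coupling argument. By Proposition~\ref{Thm1} we have $\Delta_{\text{avg-D}}(\pi_{\text{MAF}},f)\le\Delta_{\text{avg-D}}(\pi,f)$ for every $\pi\in\Pi$ and every $f\in\mathcal F$, so Problem~\eqref{optimal_eq_p} collapses to Problem~\eqref{optimal_eq2'}, and it suffices to show that the zero-wait sampler $f_{\text{zw}}$ (i.e.\ $Z_i=0$ for all $i$) attains $\min_{f\in\mathcal F}\Delta_{\text{avg-D}}(\pi_{\text{MAF}},f)$. To this end I would fix the initial ages $\{\Delta_l(0^-)\}$ and the i.i.d.\ service-time sequence $\{Y_i\}$, and run an arbitrary sampler $f$ together with $f_{\text{zw}}$ on this common sample path under MAF scheduling, then argue pathwise that every age value observed at a delivery instant is no smaller under $f$ than under $f_{\text{zw}}$.

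The first step is to verify that under MAF the sequence of served sources $(r_1,r_2,\dots)$ depends only on the initial ages, hence is identical under $f$ and $f_{\text{zw}}$. The key facts are: (i) between two consecutive deliveries all sources age by the same amount, so the ordering of the ages of the sources that were not just served is preserved; and (ii) right after a delivery $D_i$ the just-served source $r_i$ has age $Y_i=D_i-S_i$, while every other source $l$ has age $\Delta_l(D_i)=\Delta_l(D_{i-1})+(D_i-D_{i-1})\ge Y_i$ and every not-yet-served source has age at least $D_i\ge Y_i$, so a just-served source becomes (weakly) the youngest. Together these imply that MAF serves the sources in the fixed round-robin order induced by the decreasing order of the initial ages, regardless of the imposed waiting times.

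With the service orders aligned, I would then compare $\Delta_l(D_i^-)$ for each source $l$ and each delivery index $i$. Using $D_i=\sum_{j=1}^iY_j+\sum_{j=0}^{i-1}Z_j$ together with $S_k=D_{k-1}+Z_{k-1}$, and letting $k=k(l,i)$ be the index of the most recent packet from source $l$ delivered strictly before $D_i$ (the same index under both policies by Step~1), one gets $\Delta_l(D_i^-)=D_i-S_k=\sum_{j=k}^iY_j+\sum_{j=k}^{i-1}Z_j$ when source $l$ has already been served, and $\Delta_l(D_i^-)=\Delta_l(0^-)+D_i$ otherwise. In either case, since the $Y_j$ are common to both policies and all $Z_j\ge 0$, the value under $f$ is at least the value under $f_{\text{zw}}$ on the coupled path. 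Because $g$ is non-decreasing, $g(\Delta_l^{f}(D_i^-))\ge g(\Delta_l^{f_{\text{zw}}}(D_i^-))$; summing over $l=1,\dots,m$ and $i=1,\dots,n$, taking expectations, dividing by $n$ and letting $n\to\infty$ yields $\Delta_{\text{avg-D}}(\pi_{\text{MAF}},f)\ge\Delta_{\text{avg-D}}(\pi_{\text{MAF}},f_{\text{zw}})$. Since $f$ is arbitrary, $(\pi_{\text{MAF}},f_{\text{zw}})$ solves \eqref{optimal_eq_p}.

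I expect the main obstacle to be Step~1 — establishing that the MAF service order is invariant across samplers — together with the attendant degenerate cases (possible zero service times, ties in ages, and sources not yet served over a short initial horizon). These are all absorbed by the ``just-served/not-yet-served is youngest'' age monotonicity noted above and do not affect the limiting average, so the remainder is routine bookkeeping with the telescoping expressions for $D_i$ and $S_k$ and the monotonicity of $g$.
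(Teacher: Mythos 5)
Your proposal is correct and follows essentially the same route as the paper: reduce to the sampler problem via Proposition~\ref{Thm1}, write $\Delta_l(D_i^-)$ explicitly as a sum of the coupled service times $Y_j$ plus the non-negative waiting times $Z_j$ (the paper's Eqs.\ \eqref{ari_interms_ziandyi}--\eqref{ari_interms_ziandyi_2} with $k(l,i)=i-K_{li}$), and invoke the monotonicity of $g$ termwise. The one thing you add beyond the paper's Appendix~\ref{Appendix_A'} is the explicit verification that the MAF service order, and hence the indices $k(l,i)$, are sampler-independent (round-robin after the initial transient); the paper uses this implicitly, so your Step~1 is a welcome tightening rather than a deviation.
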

\begin{proof}
We prove Theorem \ref{thm_tapa} by proving that the zero-wait sampler is optimal for Problem \eqref{optimal_eq2'}. In particular, we show that the Ta-APD is an increasing function of the packets waiting times $Z_i$'s. For details, see Appendix \ref{Appendix_A'}.
\end{proof}

\begin{remark}
The results in Proposition \ref{Thm1} and Theorem \ref{thm_tapa} hold even if $\mathcal{Y}$ and $\mathcal{Z}$ are unbounded and uncountable sets. Indeed, the finiteness assumption of $\mathcal{Y}$ and $\mathcal{Z}$ is only needed for the utilization of the DP technique in the next subsection. 
\end{remark}

\subsection{Optimal Sampler for Problem \eqref{optimal_eq2}}
Although the zero-wait sampler is the optimal sampler for minimizing the Ta-APD, it is not clear whether it also minimizes the Ta-AP. This is because the latter metric may not be a non-decreasing function of the waiting times as we will see later, which makes Problem \eqref{optimal_eq2} more challenging. 
 Next, we derive the Ta-AP when the MAF scheduler is employed and reformulate Problem \eqref{optimal_eq2} as a semi-Markov decision problem.
\subsubsection{Reformulation of Problem \eqref{optimal_eq2}}
We start by analyzing  the Ta-AP when the scheduling policy is fixed to the MAF scheduler.  We decompose the area under each curve $g(\Delta_l(t))$ into a sum of disjoint geometric parts. Observing Fig. \ref{fig:ages_evolv} \footnote{Observe that a special age-penalty function is depicted in Fig. \ref{fig:ages_evolv}, where we choose $g(x)=x$ to simplify the illustration.}, this area in the time interval $[0, D_n]$, where $D_n=\sum_{i=0}^{n-1}Z_i+Y_{i+1}$, can be seen as the concatenation of the areas $Q_{li}$, $0\leq i\leq n-1$. Thus, we have
 \begin{equation}\label{integral_eq_1'}
 \int_0^{D_n}g(\Delta_l(t))dt=\sum_{i=0}^{n-1}Q_{li},
 \end{equation}
 where
 \begin{align}\label{eq_taa_p_13}
 Q_{li}=\int_{D_i}^{D_{i+1}}g(\Delta_l(t))dt=\int_{D_i}^{D_i+Z_i+Y_{i+1}}g(\Delta_l(t))dt.
 \end{align}
For $t\in[D_i,D_{i+1})$, we have 
\begin{align}
\Delta_l(t)=t-U_l(t)=t-(D_i-a_{li}),
\end{align}
where $(D_i-a_{li})$ represents the generation time of the last delivered packet from source $l$ before time $D_{i+1}$. By performing a change of variable in \eqref{eq_taa_p_13}, we get
\begin{align}
Q_{li}=\int_{a_{li}}^{a_{li}+Z_i+Y_{i+1}}g(\tau)d\tau.
\end{align}
Hence, the Ta-AP can be rewritten as
\begin{equation}\label{total_avg_age}
\limsup_{n\rightarrow\infty}\frac{\sum_{i=0}^{n-1}\mathbb{E}\left[\sum_{l=1}^m\int_{a_{li}}^{a_{li}+Z_i+Y_{i+1}}g(\tau)d\tau\right]}{\sum_{i=0}^{n-1}\mathbb{E}\left[Z_i+Y_{i+1}\right]}.
\end{equation}
Using this, the optimal sampling problem for minimizing the Ta-AP,  given that the scheduling policy is fixed to the MAF scheduler, can be cast as
\begin{equation}\label{optimal_eq_sampler}
\bar{\Delta}_{\text{avg-opt}}\triangleq\min_{f\in\mathcal{F}}
\limsup_{n\rightarrow\infty}\frac{\sum_{i=0}^{n-1}\mathbb{E}\left[\sum_{l=1}^m\int_{a_{li}}^{a_{li}+Z_i+Y_{i+1}}g(\tau)d\tau\right]}{\sum_{i=0}^{n-1}\mathbb{E}\left[Z_i+Y_{i+1}\right]}.
\end{equation}
Since $\vert\int_{a_{li}}^{a_{li}+Z_i+Y_{i+1}}g(\tau)d\tau\vert<\infty$ for all $Z_i\in\mathcal{Z}$ and $Y_{i}\in\mathcal{Y}$, and  $\mathbb{E}[Y_i]>0$ for all $i$, $\bar{\Delta}_{\text{avg-opt}}$ is bounded. Note that Problem \eqref{optimal_eq_sampler} is hard to solve in the current form. Therefore, we reformulate it. We consider the following optimization problem with a parameter $\beta\geq 0$:
\begin{equation}\label{equivilent_optimal_sampler}
\begin{split}
\Theta(\beta)\triangleq\min_{f\in\mathcal{F}}\limsup_{n\rightarrow\infty}\frac{1}{n}\sum_{i=0}^{n-1}\mathbb{E}\Bigg[\sum_{l=1}^m\int_{a_{li}}^{a_{li}+Z_i+Y_{i+1}}g(\tau)d\tau-\beta(Z_i+Y_{i+1})\Bigg],
\end{split}
\end{equation}
where $\Theta\left(\beta\right)$ is the optimal value of \eqref{equivilent_optimal_sampler}.
\begin{lemma}\label{pro_simple}
The following assertions are true:
\begin{itemize}
\item[(i).] $\bar{\Delta}_{\text{avg-opt}}\lesseqqgtr\beta$ if and only if $\Theta(\beta)\lesseqqgtr 0$.
\item[(ii).] If $\Theta(\beta)=0$, then the optimal sampling policies that solve \eqref{optimal_eq_sampler} and \eqref{equivilent_optimal_sampler} are identical.
\end{itemize}
\end{lemma}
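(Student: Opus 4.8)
The plan is to prove a per-sampler form of part (i) first and then pass to infima over samplers. Fix $f\in\mathcal{F}$ and abbreviate
\[
P_n(f)=\sum_{i=0}^{n-1}\mathbb{E}\Big[\sum_{l=1}^m\int_{a_{li}}^{a_{li}+Z_i+Y_{i+1}}g(\tau)\,d\tau\Big],\qquad T_n(f)=\sum_{i=0}^{n-1}\mathbb{E}\big[Z_i+Y_{i+1}\big],
\]
so that the objective of \eqref{optimal_eq_sampler} at $f$ is $h(f):=\limsup_{n\to\infty}P_n(f)/T_n(f)$ and the objective of \eqref{equivilent_optimal_sampler} at $f$ is $H(f,\beta):=\limsup_{n\to\infty}\frac1n\big(P_n(f)-\beta T_n(f)\big)$. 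The single structural fact I will use repeatedly is that $T_n(f)$ grows linearly in $n$, uniformly over $f$: since $\mathcal{Z}\subset\mathbb{R}^+$ is bounded with $0\in\mathcal{Z}$ and $0<\mathbb{E}[Y]<\infty$ with $Y_i$ i.i.d., one has $c_1 n\le T_n(f)\le c_2 n$ for all $n$, where $c_1:=\mathbb{E}[Y]>0$ and $c_2:=\mathbb{E}[Y]+\max\mathcal{Z}<\infty$. Writing $\frac1n\big(P_n-\beta T_n\big)=\frac{T_n}{n}\big(\frac{P_n}{T_n}-\beta\big)$ and using $c_1\le T_n/n\le c_2$ is what lets the sign (and, quantitatively, the size) of $P_n/T_n-\beta$ be traded for that of $\frac1n(P_n-\beta T_n)$, and back.

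Next I would establish, for each fixed $f$, the equivalences $h(f)\lesseqqgtr\beta\iff H(f,\beta)\lesseqqgtr 0$ by routine $\limsup$ estimates. Non-strict direction: if $h(f)\le\beta$ then for each $\epsilon>0$ we have $P_n\le(\beta+\epsilon)T_n$ for all large $n$, hence $\frac1n(P_n-\beta T_n)\le\epsilon\,T_n/n\le\epsilon c_2$, and $\epsilon\downarrow 0$ gives $H(f,\beta)\le 0$; conversely, if $H(f,\beta)\le 0$ then $P_n-\beta T_n\le\epsilon n$ eventually, so $P_n/T_n\le\beta+\epsilon n/T_n\le\beta+\epsilon/c_1$, and $\epsilon\downarrow 0$ gives $h(f)\le\beta$. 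The same computation with explicit constants handles the strict relation, hence the "$<$" equivalence: $h(f)\le\beta-\delta$ implies $H(f,\beta)\le-\delta c_1<0$, and $H(f,\beta)\le-\delta$ implies $h(f)\le\beta-\delta/c_2<\beta$; the "$>$", "$\ge$" and "$=$" cases then follow by elementary logic from the "$\le$" and "$<$" equivalences. Part (i) follows by passing to the infimum over $f$: $\bar{\Delta}_{\text{avg-opt}}<\beta$ iff some $f$ has $h(f)<\beta$ iff some $f$ has $H(f,\beta)<0$ iff $\Theta(\beta)<0$; for the "$>$" case, $\bar{\Delta}_{\text{avg-opt}}>\beta$ forces $h(f)\ge\bar{\Delta}_{\text{avg-opt}}>\beta$ for every $f$, and the quantitative bound $H(f,\beta)\ge c_1\big(h(f)-\beta\big)\ge c_1\big(\bar{\Delta}_{\text{avg-opt}}-\beta\big)>0$ then holds uniformly in $f$, so $\Theta(\beta)>0$; symmetrically, $\Theta(\beta)>0$ together with $H(f,\beta)\le c_2\big(h(f)-\beta\big)$ (valid once $h(f)\ge\beta$) forces $h(f)\ge\beta+\Theta(\beta)/c_2>\beta$ for all $f$, so $\bar{\Delta}_{\text{avg-opt}}>\beta$; the "$=$" case is what remains.

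For part (ii), combining the "$\le$" and "$\ge$" per-sampler equivalences gives $h(f)=\beta\iff H(f,\beta)=0$. Assume $\Theta(\beta)=0$; by part (i) this is equivalent to $\bar{\Delta}_{\text{avg-opt}}=\beta$. Hence the set of optimal samplers for \eqref{equivilent_optimal_sampler} is $\{f\in\mathcal{F}:H(f,\beta)=0\}$, the set of optimal samplers for \eqref{optimal_eq_sampler} is $\{f\in\mathcal{F}:h(f)=\beta\}$, and these two sets coincide by the per-sampler equivalence; in particular a sampler solves one problem exactly when it solves the other.

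The only real obstacle I anticipate is the $\limsup$ bookkeeping: since \eqref{optimal_eq_sampler} is a $\limsup$ of a quotient while \eqref{equivilent_optimal_sampler} is a $\limsup$ of a difference, the equivalence is not term-by-term, and both the strict-inequality cases and the passage from per-sampler statements to $\inf_f$ statements genuinely require the quantitative estimates, not merely the qualitative ones. Everything hinges on the two-sided linear bound $c_1 n\le T_n(f)\le c_2 n$, which is exactly where the boundedness of $\mathcal{Z}$ and the assumption $\mathbb{E}[Y]>0$ enter; granting that bound, the rest is elementary.
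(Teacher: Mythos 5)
Your proof is correct and follows essentially the same route as the paper's: both arguments hinge on the two-sided linear bound on the cumulative expected inter-sampling time (the boundedness of $\mathcal{Z}$ and $\mathbb{E}[Y]>0$), which lets the sign of the fractional objective minus $\beta$ be traded for the sign of the Dinkelbach-type objective, first for the non-strict and then for the strict relations, with part (ii) following from the per-sampler equivalence at $\beta=\bar{\Delta}_{\text{avg-opt}}$. Your version is merely a more explicitly quantified reorganization (per-sampler equivalences with constants $c_1,c_2$, then passage to infima) of the same idea.
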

\ifreport 
\begin{proof}
The proof of Lemma \ref{pro_simple} is similar to the proof of \cite[Lemma 2]{Sun_reportISIT17}. The difference is that a regenerative assumption of the inter-sampling times is used to prove the result in \cite{Sun_reportISIT17}; instead, we use the boundedness of the inter-sampling times to prove the result. For the sake of completeness, we modify the proof accordingly and provide it in  Appendix~\ref{Appendix_B}.
\end{proof}
\else
\begin{proof}
The proof of Lemma \ref{pro_simple} is similar to the proof of \cite[Lemma 2]{Sun_reportISIT17}. The difference is that the regenerative property of the inter-sampling times is used to prove the result in \cite{Sun_reportISIT17}; instead, we use the boundedness of the inter-sampling times to prove the result. For the sake of completeness, we modify the proof accordingly and provide it in our technical report \cite{Technical_report_multisource}.
\end{proof}
\fi

As a result of Lemma \ref{pro_simple}, the solution to \eqref{optimal_eq_sampler} can be obtained by solving \eqref{equivilent_optimal_sampler} and seeking a $\beta=\bar{\Delta}_{\text{avg-opt}}\geq 0$ such that $\Theta(\bar{\Delta}_{\text{avg-opt}})=0$. Lemma \ref{pro_simple} helps us to utilize the DP technique to obtain the optimal sampler. Note that without Lemma \ref{pro_simple}, it would be quite difficult to use the DP technique to solve \eqref{optimal_eq_sampler} optimally. Next, we illustrate our solution approach to  Problem \eqref{equivilent_optimal_sampler} in detail.
\subsubsection{The solution of Problem \eqref{equivilent_optimal_sampler}}
Following the methodology proposed in \cite{Bertsekas1996bookDPVol2}, 
when $\beta=\bar{\Delta}_{\text{avg-opt}}$, Problem \eqref{equivilent_optimal_sampler} is equivalent to an average cost per stage problem. According to \cite{Bertsekas1996bookDPVol2}, we describe the components of this problem in detail below.
\begin{itemize}
\item \textbf{States:} At stage\footnote{From henceforward, we assume that the duration of stage $i$ is $[D_i,D_{i+1})$.} $i$, the system state is specified by
\begin{equation}
\mathbf{s}(i)=(a_{[1]i}, \ldots, a_{[m]i}),
\end{equation}
where $a_{[l]i}$ is the $l$-th largest age of the sources at stage $i$, i.e., it is the $l$-th largest component of the vector $(a_{1i},\ldots,a_{mi})$. We use $\mathcal{S}$ to denote the state-space including all possible states. Notice that $\mathcal{S}$ is finite and bounded because $\mathcal{Z}$ and $\mathcal{Y}$ are finite and bounded. 


\item \textbf{Control action:} At stage $i$, the action that is taken by the sampler is $Z_i\in\mathcal{Z}$. 

\item \textbf{Random disturbance:} In our model, the random disturbance occurring at stage $i$ is $Y_{i+1}$, which is independent of the system state and the control action. 

\item \textbf{Transition probabilities:} If the control $Z_i=z$ is applied at stage $i$ and the service time of packet $i+1$ is $Y_{i+1}=y$, then the evolution of the system state from $\mathbf{s}(i)$ to $\mathbf{s}(i+1)$ is as follows:
\begin{equation}\label{state_evol}
\begin{split}
&a_{[m]i+1}=y,\\
&a_{[l]i+1}=a_{[l+1]i}+z+y,~ l=1,\ldots,m-1.
\end{split}
\end{equation}
We let $\mathbb{P}_{\mathbf{s}\mathbf{s}'}(z)$ denote the transition probabilities
\begin{equation}
 \mathbb{P}_{\mathbf{s}\mathbf{s}'}(z)\!=\!\mathbb{P}(\mathbf{s}(i\!+\!1)\!=\!\mathbf{s}'\vert \mathbf{s}(i)\!=\!\mathbf{s}, Z_i\!=\!z),~\mathbf{s},\mathbf{s}'\!\in\!\mathcal{S}. 
\end{equation}
When $\mathbf{s}=(a_{[1]},\ldots,a_{[m]})$ and $\mathbf{s}'=(a'_{[1]},\ldots,a'_{[m]})$, the law of the  transition probability is given by
\begin{equation}\label{trans_prob_eq}
\mathbb{P}_{\mathbf{s}\mathbf{s}'}(z)=\left\{ \begin{array}{cl}
\mathbb{P}(Y_{i+1}=y) \!\!\!\!\!\!& \  \ \text{if}~ a'_{[m]}\!=\!y~\text{and}\\ & \  \ a'_{[l]}\!=\!a_{[l\!+\!1]}\!+\!z\!+\!y~\text{for}~l\!\neq\! m; \\
0 &~ \  \text{else.} \end{array} \right. 
\end{equation}
\item\textbf{Cost Function:} Each time the system is in stage $i$ and control $Z_i$ is applied, we incur a cost 
\begin{equation}
\begin{split}
C(\mathbf{s}(i), Z_i, Y_{i+1})=\sum_{l=1}^m\int_{a_{[l]i}}^{a_{[l]i}+Z_i+Y_{i+1}}g(\tau)d\tau-\bar{\Delta}_{\text{avg-opt}}(Z_i+Y_{i+1}).
\end{split}
\end{equation}
 To simplify notation, we use the expected cost $C(\mathbf{s}(i), Z_i)$ as the cost per stage, i.e., 
\begin{equation}\label{expected_cost_not_detail}
C(\mathbf{s}(i), Z_i)=\mathbb{E}_{Y_{i+1}}\left[C(\mathbf{s}(i), Z_i, Y_{i+1})\right], 
\end{equation} 
where $\mathbb{E}_{Y_{i+1}}$ is the expectation with respect to $Y_{i+1}$, which is independent of $\mathbf{s}(i)$ and $Z_i$. 
It is important to note that there exists $c\in\mathbb{R}^+$ such that $\vert C(\mathbf{s}(i), Z_i)\vert\leq c$ for all $\mathbf{s}(i)\in\mathcal{S}$ and $Z_i\in\mathcal{Z}$. This is because $\mathcal{Z}$, $\mathcal{Y}$, $\mathcal{S}$, and $\bar{\Delta}_{\text{avg-opt}}$ are bounded. 
\end{itemize}
In general, the average cost per stage under a sampling policy $f\in\mathcal{F}$ is given by 
\begin{equation}\label{avg_cost_per_stage_1'}
\limsup_{n\rightarrow\infty}\frac{1}{n}\mathbb{E}\left[ \sum_{i=0}^{n-1}C(\mathbf{s}(i), Z_i)\right].
\end{equation}
We say that a sampling policy $f\in\mathcal{F}$ is \emph{average-optimal} if it minimizes the average cost per stage in \eqref{avg_cost_per_stage_1'}.
Our objective is to find the average-optimal sampling policy. A policy $f$ is called 
 a stationary deterministic policy if $Z_i=q(\mathbf{s}(i))$ for all $i=0,1,\ldots$, where $q:\mathbb{R}^{m+}\to \mathcal{Z}$ is a deterministic function.
 In the next proposition, we show that there is  a stationary deterministic policy that is average-optimal. 
\begin{proposition}\label{thm2}
There exist a scalar $\lambda$ and a function $h$ that satisfy the following Bellman's equation:
\begin{equation}\label{bell1'}
\lambda+h(\mathbf{s})=\min_{z\in\mathcal{Z}}\left( C(\mathbf{s},z)+\sum_{\mathbf{s}'\in\mathcal{S}}\mathbb{P}_{\mathbf{s}\mathbf{s}'}(z)h(\mathbf{s}')\right),
\end{equation}
where $\lambda$ is the optimal average cost per stage that is independent of the initial state $\mathbf{s}(0)$ and satisfies 
\begin{equation}
\lambda=\lim_{\alpha\rightarrow 1}(1-\alpha)J_{\alpha}(\mathbf{s}), \forall \mathbf{s}\in\mathcal{S}, 
\end{equation}
and $h(\mathbf{s})$ is the relative cost function that, for any state $\mathbf{o}$, satisfies
\begin{equation}\label{relative_cost_eq}
h(\mathbf{s})=\lim_{\alpha\rightarrow 1}(J_\alpha(\mathbf{s})-J_\alpha(\mathbf{o})), \forall \mathbf{s}\in\mathcal{S},
\end{equation}
where $J_\alpha(\mathbf{s})$ is the optimal total expected $\alpha$-discounted cost function, which is defined by
\begin{equation}\label{j_alpha}
 J_\alpha(\mathbf{s})=\min_{f\in\mathcal{F}}\limsup_{n\rightarrow\infty}\mathbb{E}\left[ \sum_{i=0}^{n-1}\alpha^iC(\mathbf{s}(i), Z_i)\right],\mathbf{s}(0)=\mathbf{s}\in\mathcal{S},
\end{equation} 
where $0<\alpha<1$ is the discount factor. Furthermore, there exists a stationary deterministic policy that attains the minimum in \eqref{bell1'} for each $\mathbf{s}\in\mathcal{S}$ and is average-optimal.
\end{proposition}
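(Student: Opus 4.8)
The plan is to follow the vanishing‑discount (Abelian) approach indicated in \cite{Bertsekas1996bookDPVol2}. For each discount factor $\alpha\in(0,1)$ I would first record the standard facts about the $\alpha$‑discounted problem $J_\alpha$ of \eqref{j_alpha}: since $\mathcal{S}$ and $\mathcal{Z}$ are finite and the per‑stage cost is uniformly bounded, $\vert C(\mathbf{s},z)\vert\le c$, the discounted Bellman operator $(\mathcal{T}_\alpha v)(\mathbf{s})=\min_{z\in\mathcal{Z}}\big(C(\mathbf{s},z)+\alpha\sum_{\mathbf{s}'\in\mathcal{S}}\mathbb{P}_{\mathbf{s}\mathbf{s}'}(z)\,v(\mathbf{s}')\big)$ is a sup‑norm contraction on the bounded functions on the finite set $\mathcal{S}$; hence $J_\alpha$ is its unique fixed point, $\vert J_\alpha(\mathbf{s})\vert\le c/(1-\alpha)$ for all $\mathbf{s}$, and the stationary deterministic policy attaining the minimum in $\mathcal{T}_\alpha J_\alpha=J_\alpha$ is $\alpha$‑discounted optimal. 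The goal is then to let $\alpha\to 1$ in this equation and recover \eqref{bell1'}.

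The crux is a bound on the relative value functions $h_\alpha(\mathbf{s})\triangleq J_\alpha(\mathbf{s})-J_\alpha(\mathbf{o})$ that is uniform in $\alpha$. Here I would exploit a finite‑memory property of the state recursion \eqref{state_evol}: unrolling it shows that, for every $i\ge m$, $\mathbf{s}(i)$ is a deterministic function of the service times $Y_{i-m+1},\ldots,Y_i$ and the waiting times $Z_{i-m+1},\ldots,Z_{i-1}$ alone, and in particular does not depend on the initial state $\mathbf{s}(0)$. Consequently, from any state $\mathbf{s}$ one can steer the chain to a fixed reference state $\mathbf{o}$ in bounded expected time: repeatedly apply, in consecutive blocks of $m$ stages, the open‑loop waiting sequence that realizes $\mathbf{o}$; a block ends at $\mathbf{o}$ precisely when the $m$ service times in that block take the values defining $\mathbf{o}$, which (as $\mathcal{Y}$ is finite, each of its values occurs with positive probability, and by the finite‑memory property the successive blocks are conditionally independent) happens with a fixed positive probability $p>0$, so the first hitting time $\tau_\mathbf{s}$ of $\mathbf{o}$ is stochastically dominated by $m\cdot\mathrm{Geom}(p)$ and $\sup_{\mathbf{s}\in\mathcal{S}}\mathbb{E}[\tau_\mathbf{s}]<\infty$. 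Plugging the policy ``steer to $\mathbf{o}$, then act $\alpha$‑optimally'' into the definition of $J_\alpha(\mathbf{s})$ and using $\vert C\vert\le c$, $\mathbb{E}[1-\alpha^{\tau_\mathbf{s}}]\le(1-\alpha)\mathbb{E}[\tau_\mathbf{s}]$ and $(1-\alpha)\vert J_\alpha(\mathbf{o})\vert\le c$ then yields $J_\alpha(\mathbf{s})-J_\alpha(\mathbf{o})\le 2c\,\mathbb{E}[\tau_\mathbf{s}]$; a symmetric estimate (steering back from $\mathbf{o}$, together with a short backward induction over the at most $m$ steps needed to enter the recurrent class for the few transient states unreachable from $\mathbf{o}$) gives the matching lower bound. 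Hence $\{h_\alpha\}_{\alpha<1}$ is uniformly bounded. I expect this span bound — extracting the finite‑memory structure from \eqref{state_evol} and converting it into a bounded‑hitting‑time and then bounded‑relative‑cost estimate — to be the main obstacle; the remaining steps are the standard machinery of \cite{Bertsekas1996bookDPVol2}.

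With uniform boundedness in hand I would conclude as follows. Because $\mathcal{S}$ is finite, there is a subsequence $\alpha_k\uparrow 1$ along which $(1-\alpha_k)J_{\alpha_k}(\mathbf{o})\to\lambda$ and $h_{\alpha_k}(\mathbf{s})\to h(\mathbf{s})$ for every $\mathbf{s}$; since $(1-\alpha)\vert h_\alpha(\mathbf{s})\vert\to 0$, also $(1-\alpha_k)J_{\alpha_k}(\mathbf{s})\to\lambda$ for all $\mathbf{s}$, so $\lambda$ is independent of the initial state. Rewriting the discounted Bellman equation via $J_\alpha(\mathbf{s}')=h_\alpha(\mathbf{s}')+J_\alpha(\mathbf{o})$ and $\sum_{\mathbf{s}'}\mathbb{P}_{\mathbf{s}\mathbf{s}'}(z)=1$ gives $h_\alpha(\mathbf{s})+(1-\alpha)J_\alpha(\mathbf{o})=\min_{z\in\mathcal{Z}}\big(C(\mathbf{s},z)+\alpha\sum_{\mathbf{s}'\in\mathcal{S}}\mathbb{P}_{\mathbf{s}\mathbf{s}'}(z)\,h_\alpha(\mathbf{s}')\big)$, and letting $\alpha_k\to 1$ (the minimum over the finite set $\mathcal{Z}$ passing inside the limit) produces exactly \eqref{bell1'} for $(\lambda,h)$. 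Finally, the standard verification argument gives optimality: iterating the inequality $C(\mathbf{s},z)\ge\lambda+h(\mathbf{s})-\sum_{\mathbf{s}'}\mathbb{P}_{\mathbf{s}\mathbf{s}'}(z)\,h(\mathbf{s}')$ along any sampler $f\in\mathcal{F}$, taking expectations, telescoping, dividing by $n$ and using boundedness of $h$ shows the average cost of $f$ is at least $\lambda$; the stationary deterministic policy $q(\mathbf{s})$ attaining the minimum in \eqref{bell1'} turns every inequality into an equality and hence achieves average cost exactly $\lambda$, so it is average‑optimal and $\lambda$ is the optimal average cost. Uniqueness of the optimal average cost then makes the subsequence unnecessary, yielding $\lambda=\lim_{\alpha\to1}(1-\alpha)J_\alpha(\mathbf{s})$ for all $\mathbf{s}$, and the same argument applied to the relative cost gives $h(\mathbf{s})=\lim_{\alpha\to1}(J_\alpha(\mathbf{s})-J_\alpha(\mathbf{o}))$, completing the proof.
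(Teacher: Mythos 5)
Your proof is correct, and it reaches the same conclusion through what is, at bottom, the same mathematics — but it packages the argument very differently from the paper. The paper's proof of Proposition \ref{thm2} is a single reduction: it invokes \cite[Propositions 4.2.1 and 4.2.6]{Bertsekas1996bookDPVol2} and observes that it therefore suffices to verify that the MDP is \emph{communicating}, which it does by unrolling the state recursion \eqref{state_evol} to show that any target state (being expressible in terms of $m$ service times and $m-1$ waiting times) is reachable from any initial state in $m$ stages with probability $\prod_{i=1}^m \mathbb{P}(Y_i=y_i)>0$ under a suitable stationary deterministic policy. You instead re-derive the content of those cited propositions: the same finite-memory observation about \eqref{state_evol} that the paper uses to certify the communicating hypothesis becomes, in your hands, a bounded-expected-hitting-time estimate (geometric trials over blocks of $m$ stages), which yields the uniform span bound $\vert J_\alpha(\mathbf{s})-J_\alpha(\mathbf{o})\vert\le 2c\,\sup_{\mathbf{s}}\mathbb{E}[\tau_{\mathbf{s}}]$, after which the subsequence extraction, the limit passage in the discounted Bellman equation, and the verification/telescoping argument are all standard and correctly executed. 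What your version buys is self-containedness and an explicit explanation of \emph{why} the communicating structure forces a solution of \eqref{bell1'} to exist; what the paper's version buys is brevity. Two minor points: you rightly handle the reference state and the transient initial states via the backward induction over $m$ steps, but your final claim that the \emph{full} limits $\lambda=\lim_{\alpha\to 1}(1-\alpha)J_\alpha(\mathbf{s})$ and $h(\mathbf{s})=\lim_{\alpha\to 1}(J_\alpha(\mathbf{s})-J_\alpha(\mathbf{o}))$ exist (rather than subsequential ones) is only fully justified for $\lambda$ by your uniqueness-of-optimal-average-cost argument; for $h$ it requires an additional uniqueness statement for normalized solutions of \eqref{bell1'} that you assert but do not prove — this is exactly the part that the paper inherits for free from the cited textbook propositions.
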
  
\begin{proof}
According to \cite[Proposition 4.2.1 and Proposition 4.2.6]{Bertsekas1996bookDPVol2}, it is enough to show that for every two states $\mathbf{s}$ and $\mathbf{s}'$, there exists a stationary deterministic policy $f$ such that for some $k$, we have $\mathbb{P}\left[\mathbf{s}(k)=\mathbf{s}'\vert\mathbf{s}(0)=\mathbf{s}, f\right] > 0$, i.e., we have a communicating Markov decision process (MDP). Observe that the proof idea of this proposition is different from those used in literature such as \cite{hsu2017scheduling,hsu2018age}, where they have used the discounted cost problem to show their results and then connect it to the average cost problem. For details, see Appendix \ref{Appendix_C}.
\end{proof}
We can deduce from Proposition \ref{thm2} that the optimal waiting time is a fixed function of the state $\mathbf{s}$. Next, we use the RVI algorithm to obtain the optimal sampler for minimizing the Ta-AP, and then exploit the structure of our problem to reduce its complexity.



\textbf{Optimal Sampler Structure:}
The RVI algorithm \cite[Section 9.5.3]{puterman2005markov}, \cite[Page 171]{kaelbling1996recent} can be used to solve Bellman's equation \eqref{bell1'}. Starting with an arbitrary state $\mathbf{o}$, a single iteration for the RVI algorithm is given as follows:
\begin{equation}\label{RVI1}
\begin{split}
Q_{n+1}(\mathbf{s},z)&= C(\mathbf{s},z)+\sum_{\mathbf{s}'\in\mathcal{S}}\mathbb{P}_{\mathbf{s}\mathbf{s}'}(z)h_n(\mathbf{s}'),\\
J_{n+1}(\mathbf{s})&=\min_{z\in\mathcal{Z}}(Q_{n+1}(\mathbf{s},z)),\\
h_{n+1}(\mathbf{s})&= J_{n+1}(\mathbf{s})-J_{n+1}(\mathbf{o}),
\end{split}
\end{equation}
where $Q_{n+1}(\mathbf{s},z)$, $J_{n}(\mathbf{s})$, and $h_{n}(\mathbf{s})$ denote the state action value function, value function, and relative value function for iteration $n$, respectively. In the beginning, we set $J_{0}(\mathbf{s})=0$ for all $\mathbf{s}\in\mathcal{S}$, and then we repeat the iteration of the RVI algorithm as described before\footnote{ According to \cite{puterman2005markov, kaelbling1996recent}, a sufficient condition for the convergence of the RVI algorithm is the aperiodicity of the transition matrices of stationary deterministic optimal policies. In our case, these transition matrices depend on the service times. This condition can always be achieved by applying the aperiodicity transformation as explained in \cite[Section 8.5.4]{puterman2005markov}, which is a simple transformation. However, This is not always necessary to be done.}.
 
 The complexity of the RVI algorithm is high due to many sources (i.e., the curse of dimensionality \cite{powell2007approximate}). Thus, we need to simplify the RVI algorithm. To that end, we show that the optimal sampler has a threshold property that can reduce the complexity of the RVI algorithm. Define $z^\star_s$ as the optimal waiting time for state $\mathbf{s}$, and $Y$ as a random variable that has the same distribution as $Y_i$. The threshold property in the optimal sampler is manifested in the following proposition:

\begin{proposition}\label{th_thm}
If the state $\mathbf{s}=(a_{[1]}, \ldots, a_{[m]})$ satisfies $\mathbb{E}_Y\left[\sum_{l=1}^mg(a_{[l]}+Y)\right]\ge \bar{\Delta}_{\text{avg-opt}}$, then we have $z^\star_s=0$.

\end{proposition}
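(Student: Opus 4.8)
The plan is to argue directly from Bellman's equation \eqref{bell1'} provided by Proposition \ref{thm2}. Abbreviate the bracket on its right-hand side as $V(\mathbf{s},z)\triangleq C(\mathbf{s},z)+\sum_{\mathbf{s}'\in\mathcal{S}}\mathbb{P}_{\mathbf{s}\mathbf{s}'}(z)h(\mathbf{s}')$, and let $\mathbf{s}'(z,y)$ denote the state reached from $\mathbf{s}$ under waiting time $z$ and service time $y$ according to \eqref{state_evol}; by \eqref{trans_prob_eq} we have $\sum_{\mathbf{s}'}\mathbb{P}_{\mathbf{s}\mathbf{s}'}(z)h(\mathbf{s}')=\mathbb{E}_Y[h(\mathbf{s}'(z,Y))]$. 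Since $0\in\mathcal{Z}$, it suffices to show that whenever $\mathbb{E}_Y[\sum_{l=1}^m g(a_{[l]}+Y)]\ge\bar{\Delta}_{\text{avg-opt}}$ one has $V(\mathbf{s},z)\ge V(\mathbf{s},0)$ for every $z\in\mathcal{Z}$; then $z=0$ attains the minimum in \eqref{bell1'}, so we may take $z^\star_s=0$.

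The main step is a monotonicity lemma: $h$ is non-decreasing in the componentwise order on $\mathcal{S}$, i.e. $\mathbf{s}\le\tilde{\mathbf{s}}$ implies $h(\mathbf{s})\le h(\tilde{\mathbf{s}})$. Because $h(\mathbf{s})=\lim_{\alpha\to1}(J_\alpha(\mathbf{s})-J_\alpha(\mathbf{o}))$ by \eqref{relative_cost_eq}, it is enough to show each $\alpha$-discounted value function $J_\alpha$ is componentwise non-decreasing. I would prove this by induction on the $\alpha$-discounted value-iteration iterates $J_{\alpha,0}\equiv 0$, $J_{\alpha,n+1}(\mathbf{s})=\min_{z\in\mathcal{Z}}\big(C(\mathbf{s},z)+\alpha\,\mathbb{E}_Y[J_{\alpha,n}(\mathbf{s}'(z,Y))]\big)$, which converge to $J_\alpha$ since the per-stage cost is bounded ($|C(\mathbf{s},z)|\le c$). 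The induction uses three elementary facts: (a) for each fixed $z$, $C(\mathbf{s},z)$ is componentwise non-decreasing in $\mathbf{s}$, since $\int_{a}^{a+z+y}g(\tau)\,d\tau=\int_0^{z+y}g(a+u)\,du$ is non-decreasing in $a$ when $g$ is non-decreasing; (b) for each fixed $z$ and $y$, the map $\mathbf{s}\mapsto\mathbf{s}'(z,y)$ of \eqref{state_evol} is componentwise non-decreasing, and its image is automatically a correctly sorted state because $a_{[m]}\ge 0$ and $z\ge 0$ force $a_{[m]}+z+y\ge y$; and (c) a pointwise minimum over $z$ of componentwise non-decreasing functions is componentwise non-decreasing. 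Passing to the limits $n\to\infty$ and then $\alpha\to1$ gives the monotonicity of $h$.

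Given the lemma, the conclusion follows from a finite-difference estimate. For any $z\in\mathcal{Z}$,
\[
V(\mathbf{s},z)-V(\mathbf{s},0)=\big[C(\mathbf{s},z)-C(\mathbf{s},0)\big]+\mathbb{E}_Y\big[h(\mathbf{s}'(z,Y))-h(\mathbf{s}'(0,Y))\big].
\]
For the first term, monotonicity of $g$ gives $\int_{a_{[l]}+y}^{a_{[l]}+z+y}g(\tau)\,d\tau\ge z\,g(a_{[l]}+y)$, hence $C(\mathbf{s},z)-C(\mathbf{s},0)\ge z\big(\mathbb{E}_Y[\sum_{l=1}^m g(a_{[l]}+Y)]-\bar{\Delta}_{\text{avg-opt}}\big)\ge 0$ under the hypothesis. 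For the second term, comparing \eqref{state_evol} for waiting times $z$ and $0$ shows $\mathbf{s}'(z,Y)\ge\mathbf{s}'(0,Y)$ componentwise — its $m-1$ largest entries are shifted up by $z$ while its smallest entry equals $Y$ in both cases — so $h(\mathbf{s}'(z,Y))\ge h(\mathbf{s}'(0,Y))$ by the lemma. Therefore $V(\mathbf{s},z)\ge V(\mathbf{s},0)$ for every $z\in\mathcal{Z}$, which is exactly what is needed.

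I expect the monotonicity lemma for $h$ to be the only delicate point: one must justify convergence of value iteration and the interchange of the $\alpha\to1$ limit with the ordering (both routine once $|C(\mathbf{s},z)|\le c$ and Proposition \ref{thm2} are available), and — most importantly — verify carefully that the sorted-state dynamics \eqref{state_evol} are order-preserving and keep the state vector correctly sorted. Everything downstream of the lemma is an elementary inequality that uses only that $g$ is non-decreasing, with no appeal to convexity or continuity.
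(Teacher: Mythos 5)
Your proposal is correct, and its backbone --- establishing that $h$ is componentwise non-decreasing by induction on the $\alpha$-discounted value-iteration iterates and then letting $\alpha\to 1$ --- is exactly the paper's Step~1 (Lemma~\ref{lem2} in Appendix~\ref{Appendix_E}), including the same three ingredients: monotonicity of $C(\mathbf{s},z)$ in $\mathbf{s}$, order-preservation of the sorted dynamics \eqref{state_evol}, and preservation of monotonicity under the minimum over $z$. Where you diverge is the final step. The paper splits the Bellman bracket into its two summands and argues separately: the term $\sum_{\mathbf{s}'}\mathbb{P}_{\mathbf{s}\mathbf{s}'}(z)h(\mathbf{s}')$ is increasing in $z$ (so minimized at $z=0$ unconditionally), while for $C(\mathbf{s},z)$ it invokes convexity of $z\mapsto\mathbb{E}_Y[r(\mathbf{s},z,Y)]$, computes one-sided derivatives (with a monotone-convergence interchange), and characterizes the exact minimizer as $z=\inf\{t\ge 0:\mathbb{E}_Y[\sum_l g(a_{[l]}+t+Y)]\ge\bar{\Delta}_{\text{avg-opt}}\}$, from which the threshold test follows. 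You instead compare $V(\mathbf{s},z)$ with $V(\mathbf{s},0)$ directly via the elementary bound $\int_{a+y}^{a+z+y}g(\tau)\,d\tau\ge z\,g(a+y)$, which sidesteps convexity and the one-sided-derivative machinery entirely; this is a cleaner route to the proposition as stated. What the paper's heavier computation buys is the explicit formula \eqref{eqder_79} for the minimizer of the cost term, which is reused almost verbatim to derive the approximate threshold samplers \eqref{eqder_bell_41}--\eqref{eqder_bell_42} in Section~\ref{Bellman}; your argument proves the threshold test but does not produce that formula.
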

\begin{proof}
See Appendix \ref{Appendix_E}.
\end{proof}
\begin{algorithm}[!t]
\SetKwData{NULL}{NULL}
\textbf{given} $l=0$, sufficiently large $u$, tolerance $\epsilon_1>0$, tolerance $\epsilon_2>0$\;
\While{$u-l>\epsilon_1$}{
$\beta=\frac{l+u}{2}$\;
$J(\mathbf{s})=0$, $h(\mathbf{s})=0$, $h_{\text{last}}(\mathbf{s})=0$ for all states $\mathbf{s}\in\mathcal{S}$\;
\While{$\max_{\mathbf{s}\in\mathcal{S}}\vert h(\mathbf{s})-h_{\text{last}}(\mathbf{s})\vert> \epsilon_2$} {
\For{\textbf{each} $\mathbf{s}\in\mathcal{S}$}{
\uIf{$\mathbb{E}_Y\left[\sum_{l=1}^mg(a_{[l]}+Y)\right]\geq \beta$}{
$z^\star_s=0$\;}
\Else{
$z^\star_s=\text{argmin}_{z\in\mathcal{Z}}C(\mathbf{s},z)+\sum_{\mathbf{s}'\in\mathcal{S}}\mathbb{P}_{\mathbf{s}\mathbf{s}'}(z)h(\mathbf{s}')$\;
}
$J(\mathbf{s})=C(\mathbf{s},z^\star_s)+\sum_{\mathbf{s}'\in\mathcal{S}}\mathbb{P}_{\mathbf{s}\mathbf{s}'}(z^\star_s)h(\mathbf{s}')$\;
}
$h_{\text{last}}(\mathbf{s})=h(\mathbf{s})$\;
$h(\mathbf{s})=J(\mathbf{s})-J(\mathbf{o})$\;
}
\uIf{$J(\mathbf{o})\geq 0$}{
$u=\beta$\;}
\Else{
$l=\beta$\;
}
}
\caption{ RVI algorithm with reduced complexity.}\label{alg1}
\end{algorithm}
We can exploit the threshold test in Proposition \ref{th_thm} to reduce the complexity of the RVI algorithm as follows: The optimal waiting time for any state $\mathbf{s}$ that satisfies $\mathbb{E}_Y\left[\sum_{l=1}^mg(a_{[l]}+Y)\right]\geq \bar{\Delta}_{\text{avg-opt}}$ is zero. Thus, we need to solve \eqref{RVI1} only for the states that fail this threshold test. As a result, we reduce the number of computations required along the system state space, which reduces the complexity of the RVI algorithm. Note that $\bar{\Delta}_{\text{avg-opt}}$ can be obtained using the bisection method or any other one-dimensional search method. Combining this with the result of Proposition \ref{th_thm} and the RVI algorithm,  we propose the ``RVI with reduced complexity (RVI-RC) sampler'' in Algorithm \ref{alg1}. In the outer  layer of Algorithm \ref{alg1},  bisection is employed to obtain $\bar{\Delta}_{\text{avg-opt}}$, where $\beta$ converges to $\bar{\Delta}_{\text{avg-opt}}$.

Note that, according to  \cite{puterman2005markov,kaelbling1996recent}, $J(\mathbf{o})$ in Algorithm \ref{alg1} converges to the optimal average cost per stage. Moreover, the value of $u$ in Algorithm \ref{alg1} can be initialized to the value of the Ta-AP of the zero-wait sampler (as the Ta-AP of the zero-wait sampler provides an upper bound on the optimal Ta-AP), which can be easily calculated. 

The RVI algorithm and Whittle's methodology have been used in literature to obtain the optimal age scheduler in time-slotted multi-source networks (e.g.,\cite{hsu2017scheduling,hsu2018age}). Since they considered a time-slotted system, their model belongs to the class of Markov decision problems. In contrast, we consider random discrete transmission times that can be more than one time slot. Thus, our model belongs to the class of  semi-Markov decision problems, and hence is different from those in \cite{hsu2017scheduling,hsu2018age}.

In conclusion, an optimal solution for Problem \eqref{optimal_eq} is manifested in the following theorem:
\begin{theorem}\label{thm_taa}
The MAF scheduler and the RVI-RC sampler form an optimal solution for Problem \eqref{optimal_eq}. 
\end{theorem}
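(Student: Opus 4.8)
The plan is to assemble Theorem~\ref{thm_taa} from the chain of reductions built up in this section, checking at each link that optimality is preserved. First I would invoke Proposition~\ref{Thm1}: for \emph{every} sampler $f\in\mathcal{F}$ the MAF scheduler attains $\min_{\pi\in\Pi}\Delta_{\text{avg}}(\pi,f)$, so $\bar{\Delta}_{\text{avg-opt}}=\min_{f\in\mathcal{F}}\Delta_{\text{avg}}(\pi_{\text{MAF}},f)$ and Problem~\eqref{optimal_eq} collapses to Problem~\eqref{optimal_eq2}. It then remains only to show that the RVI-RC sampler solves \eqref{optimal_eq2}. Next, the area-decomposition in \eqref{integral_eq_1'}--\eqref{optimal_eq_sampler} rewrites \eqref{optimal_eq2} in the ratio form \eqref{optimal_eq_sampler}, and Lemma~\ref{pro_simple} converts the latter into the search for the unique $\beta=\bar{\Delta}_{\text{avg-opt}}\ge 0$ with $\Theta(\beta)=0$, the optimal policies of \eqref{optimal_eq_sampler} and of \eqref{equivilent_optimal_sampler} being identical at that $\beta$.

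The second step is to use Proposition~\ref{thm2}: at $\beta=\bar{\Delta}_{\text{avg-opt}}$, Problem~\eqref{equivilent_optimal_sampler} is an average-cost-per-stage semi-Markov decision problem on the finite, bounded state space $\mathcal{S}$, Bellman's equation \eqref{bell1'} has a solution $(\lambda,h)$, and a stationary deterministic policy attaining the per-state minimum in \eqref{bell1'} is average-optimal. The RVI recursion \eqref{RVI1} converges (after the aperiodicity normalization noted in the footnote, which is without loss of generality) to such a pair $(\lambda,h)$, so its greedy policy is an optimal sampler for \eqref{equivilent_optimal_sampler}, hence for \eqref{optimal_eq_sampler} and for \eqref{optimal_eq2}.

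The third step is to argue that Algorithm~\ref{alg1} actually returns this policy. For the inner loop, Proposition~\ref{th_thm} shows that every state $\mathbf{s}$ with $\mathbb{E}_Y[\sum_{l=1}^m g(a_{[l]}+Y)]\ge\bar{\Delta}_{\text{avg-opt}}$ has optimal action $z^\star_s=0$; hence replacing the minimization in \eqref{RVI1} by the assignment $z^\star_s=0$ on exactly those states leaves the RVI fixed point, and therefore its greedy policy, unchanged while skipping the expensive minimizations — this is the precise sense in which the complexity is reduced without loss of optimality. For the outer loop, by Lemma~\ref{pro_simple}(i) we have $J(\mathbf{o})\to\Theta(\beta)$ with $\Theta(\beta)\ge 0\iff\beta\ge\bar{\Delta}_{\text{avg-opt}}$, so the bisection updates ($u=\beta$ when $J(\mathbf{o})\ge 0$, else $l=\beta$) drive $\beta\to\bar{\Delta}_{\text{avg-opt}}$; in that limit the inner-loop threshold test uses the correct constant and the returned sampler solves \eqref{optimal_eq_sampler}. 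Combining this with the MAF optimality of the first step yields Theorem~\ref{thm_taa}.

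I expect the delicate point to be the joint convergence of the two nested loops: while $\beta$ is still away from $\bar{\Delta}_{\text{avg-opt}}$, the inner RVI-RC applies the test of Proposition~\ref{th_thm} with the \emph{current} $\beta$ rather than with $\bar{\Delta}_{\text{avg-opt}}$, so one must check that this misclassifies only finitely many borderline states and that, as $\beta\to\bar{\Delta}_{\text{avg-opt}}$, the induced policy converges to the true optimizer; one must also confirm that hard-coding $z^\star_s=0$ on the super-threshold states genuinely does not perturb the RVI fixed point, i.e., that Proposition~\ref{th_thm} captures all such states up to ties in $\argmin$. The remaining links are bookkeeping once Propositions~\ref{Thm1}, \ref{thm2}, \ref{th_thm} and Lemma~\ref{pro_simple} are in hand.
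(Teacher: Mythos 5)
Your proposal is correct and follows essentially the same route as the paper, which simply states that the theorem follows from Propositions \ref{Thm1}, \ref{thm2}, and \ref{th_thm} (with Lemma \ref{pro_simple} implicitly supplying the Dinkelbach-style reduction); you have just spelled out the chain of reductions in full. The delicate points you flag about the nested bisection/RVI convergence and about Proposition \ref{th_thm} only giving a sufficient condition for $z^\star_s=0$ are real subtleties that the paper glosses over, so raising them is a strength rather than a deviation.
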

\begin{proof}
The theorem follows directly from Proposition \ref{Thm1}, Proposition \ref{thm2}, and Proposition \ref{th_thm}.
\end{proof}

\subsubsection{Special Case of $g(x)=x$}
Now we consider the case of $g(x)=x$ and obtain some useful insights. Define $A_s=\sum_{l=1}^ma_{[l]}$ as the sum of the age values of state $\mathbf{s}$. The threshold test in Proposition \ref{th_thm} is simplified as follows:
\begin{proposition}\label{prop_9_special_case}
If the state $\mathbf{s}=(a_{[1]}, \ldots, a_{[m]})$ satisfies $A_{s}\geq (\bar{\Delta}_{\text{avg-opt}}-m\mathbb{E}[Y])$, then we have $z^\star_s=0$.
\end{proposition}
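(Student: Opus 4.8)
The plan is to obtain Proposition~\ref{prop_9_special_case} as a direct specialization of Proposition~\ref{th_thm} to the linear age-penalty function $g(x)=x$. The only substantive step is to rewrite the threshold condition $\mathbb{E}_Y\left[\sum_{l=1}^m g(a_{[l]}+Y)\right]\ge \bar{\Delta}_{\text{avg-opt}}$ from Proposition~\ref{th_thm} in closed form for this particular choice of $g$, and then check that it coincides with the condition $A_s\ge \bar{\Delta}_{\text{avg-opt}}-m\,\mathbb{E}[Y]$ stated here.

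First I would substitute $g(x)=x$ into the left-hand side of the threshold test. The age components $a_{[l]}$ are deterministic given the state $\mathbf{s}$, so by linearity of expectation and the fact that $Y$ has the same distribution as every $Y_i$,
\[
\mathbb{E}_Y\left[\sum_{l=1}^m g(a_{[l]}+Y)\right]=\mathbb{E}_Y\left[\sum_{l=1}^m (a_{[l]}+Y)\right]=\sum_{l=1}^m a_{[l]}+m\,\mathbb{E}[Y]=A_s+m\,\mathbb{E}[Y],
\]
where the last equality uses the definition $A_s=\sum_{l=1}^m a_{[l]}$. Consequently the hypothesis of Proposition~\ref{th_thm} becomes $A_s+m\,\mathbb{E}[Y]\ge \bar{\Delta}_{\text{avg-opt}}$, which is equivalent to $A_s\ge \bar{\Delta}_{\text{avg-opt}}-m\,\mathbb{E}[Y]$. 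Thus any state $\mathbf{s}$ meeting the assumption of Proposition~\ref{prop_9_special_case} also meets the assumption of Proposition~\ref{th_thm}, and invoking the latter gives $z^\star_s=0$.

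Since the argument reduces to a one-line computation of an expectation followed by an appeal to Proposition~\ref{th_thm}, there is no genuine obstacle here; this is essentially a corollary. The only point worth a moment of care is that the expectation manipulations are legitimate and $\mathbb{E}[Y]$ is finite, which is ensured by the standing assumptions $0<\mathbb{E}[Y_i]<\infty$ and the finiteness and boundedness of $\mathcal{Y}$.
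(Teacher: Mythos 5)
Your proposal is correct and follows exactly the same route as the paper, which proves the proposition by substituting $g(x)=x$ into the threshold test of Proposition~\ref{th_thm}; you have simply written out the one-line computation $\mathbb{E}_Y\bigl[\sum_{l=1}^m (a_{[l]}+Y)\bigr]=A_s+m\,\mathbb{E}[Y]$ that the paper leaves implicit.
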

\begin{proof}
The proposition follows directly by substituting $g(x)=x$ into the threshold test in Proposition \ref{th_thm}.
\end{proof}
Hence, the only change in Algorithm \ref{alg1} is to replace the threshold test in Step 7 by $A_{s}\geq (\bar{\Delta}_{\text{avg-opt}}-m\mathbb{E}[Y])$. Let $y_{\text{inf}}=\inf \{y\in\mathcal{Y}: \mathbb{P}[Y=y]>0\}$, i.e., $y_{\text{inf}}$ is the smallest possible transmission time in $\mathcal{Y}$. As a result of Proposition \ref{prop_9_special_case}, we obtain the following sufficient condition for the optimality of the zero-wait sampler for minimizing the Ta-AP when $g(x)=x$:
\begin{theorem}\label{zero_wait_samp_optimalty}
If 
\begin{align}\label{zero_wait_cond_32}
y_{\text{inf}}\geq \frac{(m-1)\mathbb{E}[Y]^2+\mathbb{E}[Y^2]}{(m+1)\mathbb{E}[Y]},
\end{align}
then the zero-wait sampler is optimal for Problem \eqref{equivilent_optimal_sampler}.
\end{theorem}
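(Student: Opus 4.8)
The plan is to show that, under condition~\eqref{zero_wait_cond_32}, \emph{every} system state that can actually be visited passes the threshold test of Proposition~\ref{prop_9_special_case}, so that the optimal waiting time equals $0$ in every such state; this makes the zero-wait sampler average-optimal, i.e., optimal for~\eqref{equivilent_optimal_sampler}. The argument splits into two independent computations that are then glued together: a uniform lower bound on the sum of ages $A_{\mathbf{s}}=\sum_{l=1}^{m}a_{[l]}$ over all reachable states, and the exact value of the Ta-AP of the zero-wait sampler $f_{\text{zw}}$, combined via the trivial bound $\bar{\Delta}_{\text{avg-opt}}\le\Delta_{\text{avg}}(\pi_{\text{MAF}},f_{\text{zw}})$ (which holds because $f_{\text{zw}}$ is feasible).

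First I would prove that, under the MAF scheduler with \emph{any} sampler, $a_{[l]i}\ge(m-l+1)\,y_{\text{inf}}$ for every stage $i\ge m$ and every $l=1,\dots,m$, and hence $A_{\mathbf{s}(i)}\ge y_{\text{inf}}\,\frac{m(m+1)}{2}$. The mechanism is the \emph{permanence of the rank order} under MAF: the source served in stage $j$ has age $Y_{j+1}$ at stage $j+1$, whereas every other source has age at least $a_{[m]j}+Z_j+Y_{j+1}=Y_j+Z_j+Y_{j+1}>Y_{j+1}$, so the just-served source is strictly the youngest; iterating, the rank-$l$ source at stage $i$ is exactly the one served in stage $i-(m-l)-1$. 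Unrolling the recursion~\eqref{state_evol} with generic $Z_k\ge 0$ then gives $a_{[l]i}=Y_{i-(m-l)}+\sum_{k=i-(m-l)}^{i-1}(Z_k+Y_{k+1})\ge(m-l+1)\,y_{\text{inf}}$, and summing over $l$ yields the bound. Since it is uniform over samplers, it applies in particular to every state in the recurrent class of an average-optimal stationary policy (which exists by Proposition~\ref{thm2}), while the first $m$ transient stages are irrelevant to the long-run average.

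Second I would compute $\Delta_{\text{avg}}(\pi_{\text{MAF}},f_{\text{zw}})$ for $g(x)=x$. Setting $Z_i\equiv 0$ in~\eqref{total_avg_age}, the per-stage numerator term equals $\sum_{l=1}^{m}\int_{a_{li}}^{a_{li}+Y_{i+1}}\tau\,d\tau=Y_{i+1}A_{\mathbf{s}(i)}+\frac{m}{2}Y_{i+1}^{2}$, and from the rank structure above one has $A_{\mathbf{s}(i)}=\sum_{j=0}^{m-1}(m-j)Y_{i-j}$ in steady state, which is independent of $Y_{i+1}$. Taking expectations gives $\Delta_{\text{avg}}(\pi_{\text{MAF}},f_{\text{zw}})=\frac{m(m+1)\mathbb{E}[Y]^{2}+m\mathbb{E}[Y^{2}]}{2\mathbb{E}[Y]}$. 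A short algebraic rearrangement then shows that~\eqref{zero_wait_cond_32} is \emph{equivalent} to the inequality $y_{\text{inf}}\,\frac{m(m+1)}{2}\ge\Delta_{\text{avg}}(\pi_{\text{MAF}},f_{\text{zw}})-m\mathbb{E}[Y]$.

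Finally I would chain the pieces. Because $f_{\text{zw}}$ is feasible, $\bar{\Delta}_{\text{avg-opt}}\le\Delta_{\text{avg}}(\pi_{\text{MAF}},f_{\text{zw}})$; hence, under~\eqref{zero_wait_cond_32}, every state reached by the zero-wait sampler from stage $m$ onward satisfies $A_{\mathbf{s}}\ge y_{\text{inf}}\,\frac{m(m+1)}{2}\ge\Delta_{\text{avg}}(\pi_{\text{MAF}},f_{\text{zw}})-m\mathbb{E}[Y]\ge\bar{\Delta}_{\text{avg-opt}}-m\mathbb{E}[Y]$, so by Proposition~\ref{prop_9_special_case} the optimal action at every such state is $z^{\star}_{\mathbf{s}}=0$ — precisely the action that $f_{\text{zw}}$ takes. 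Since the first $m$ stages do not affect the long-run average, $f_{\text{zw}}$ attains the minimum in Bellman's equation~\eqref{bell1'} at every recurrent state and is therefore average-optimal, i.e., it solves~\eqref{equivilent_optimal_sampler}. I expect the main obstacle to be the first step: carefully verifying that the permanence of the rank order — and hence $a_{[l]i}\ge(m-l+1)y_{\text{inf}}$ — holds for \emph{arbitrary} waiting-time sequences rather than only for the zero-wait sampler, together with the bookkeeping of the short transient; the steady-state expectation in the second step is routine once the independence of $Y_{i+1}$ from $A_{\mathbf{s}(i)}$ is noted.
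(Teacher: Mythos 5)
Your proposal is correct and follows essentially the same route as the paper's Appendix~\ref{Appendix_E'}: lower-bound $A_{\mathbf{s}}$ by $\tfrac{m(m+1)}{2}y_{\text{inf}}$ via the state evolution, upper-bound $\bar{\Delta}_{\text{avg-opt}}$ by the zero-wait Ta-AP $\bar{\Delta}_0=\bigl(\tfrac{m(m+1)}{2}\mathbb{E}[Y]^2+\tfrac{m}{2}\mathbb{E}[Y^2]\bigr)/\mathbb{E}[Y]$, and observe that \eqref{zero_wait_cond_32} is exactly the chained inequality making every state pass the threshold test of Proposition~\ref{prop_9_special_case}. Your added care about the rank-order permanence and the first $m$ transient stages only sharpens a step the paper states tersely.
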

\begin{proof}
See Appendix \ref{Appendix_E'}
\end{proof}
From Theorem \ref{zero_wait_samp_optimalty}, it immediately follows that:
\begin{corollary}\label{zero_opt_time_slotted}
If the transmission times are positive and constant (i.e., $Y_i=const>0$ for all $i$), then the zero-wait sampler is optimal for Problem \eqref{equivilent_optimal_sampler}.
\end{corollary}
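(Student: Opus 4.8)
The plan is to obtain Corollary~\ref{zero_opt_time_slotted} as an immediate specialization of Theorem~\ref{zero_wait_samp_optimalty}: all that needs to be checked is that a positive constant transmission time meets the sufficient condition \eqref{zero_wait_cond_32}.

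First I would set $Y_i\equiv c$ for all $i$, where $c>0$ is the deterministic transmission time, so that $\mathbb{E}[Y]=c$, $\mathbb{E}[Y^2]=c^2$, and $y_{\text{inf}}=c$ (since $c$ is the only value $Y$ takes with positive probability). Substituting these into the right-hand side of \eqref{zero_wait_cond_32} gives
\begin{equation}
\frac{(m-1)\mathbb{E}[Y]^2+\mathbb{E}[Y^2]}{(m+1)\mathbb{E}[Y]}=\frac{(m-1)c^2+c^2}{(m+1)c}=\frac{mc}{m+1}.
\end{equation}
Because $m\ge 1$ implies $\frac{m}{m+1}<1$, we get $\frac{mc}{m+1}<c=y_{\text{inf}}$, so condition \eqref{zero_wait_cond_32} holds (indeed with strict inequality). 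Theorem~\ref{zero_wait_samp_optimalty} then yields that the zero-wait sampler is optimal for Problem~\eqref{equivilent_optimal_sampler}, which is the claim.

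There is essentially no obstacle here --- the substantive content is already carried by Theorem~\ref{zero_wait_samp_optimalty} (established via Proposition~\ref{prop_9_special_case} in Appendix~\ref{Appendix_E'}), and the corollary is pure arithmetic. If a self-contained argument were preferred instead, the route would be to invoke Proposition~\ref{prop_9_special_case} directly with $Y\equiv c$: the threshold test becomes $A_s\ge \bar{\Delta}_{\text{avg-opt}}-mc$, and one would then argue that, under the MAF scheduler with zero waiting, every reachable state satisfies this inequality in steady state, so the threshold is never active and inserting a waiting time is never beneficial. Given Theorem~\ref{zero_wait_samp_optimalty}, however, the substitution above is all that is needed.
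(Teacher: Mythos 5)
Your proposal is correct and follows exactly the paper's route: the paper likewise derives the corollary directly from Theorem~\ref{zero_wait_samp_optimalty} by verifying that \eqref{zero_wait_cond_32} always holds for a positive constant transmission time, and your substitution $\mathbb{E}[Y]=c$, $\mathbb{E}[Y^2]=c^2$, $y_{\text{inf}}=c$ yielding $\frac{mc}{m+1}<c$ is precisely the arithmetic the paper leaves implicit. No gaps.
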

\begin{proof}
The corollary follows directly from Theorem \ref{zero_wait_samp_optimalty} by showing that \eqref{zero_wait_cond_32} always holds in this case.
\end{proof}
Corollary \ref{zero_opt_time_slotted} suggests that the designed schedulers in
\cite{aphermedis_he2017optimal,kadota2016minimizing,kadota2016minimizing_journal,hsu2017scheduling,kadota2018optimizing,hsu2018age,yates2017status,talak2018distributed,talak2018optimizing,talak2018scheduling,talak2018optimizing2} are indeed optimal in time-slotted systems. However, if there is a variation in the transmission times, these schedulers alone may not be optimal anymore, and we need to optimize the sampling times as well. 

\section{Low-complexity Sampler Design via Bellman's Equation Approximation}\label{Bellman}
In this section, we try to obtain low-complexity samplers via an approximate analysis for Bellman's equation in \eqref{bell1'}. The obtained low-complexity samplers in this section will be shown to have near optimal age performance in our numerical results in Section \ref{Simulations}. For a given state $\mathbf{s}$, we denote the next state given $z$ and $y$ by $\mathbf{s}'(z,y)$. We can observe that the transition probability in \eqref{trans_prob_eq} depends only on the distribution of the packet service time which is independent of the system state and the control action. Hence, the second term in Bellman's equation in \eqref{bell1'} can be rewritten as
\begin{align}
\sum_{\mathbf{s}'\in\mathcal{S}}\mathbb{P}_{\mathbf{s}\mathbf{s}'}(z)h(\mathbf{s}'(z,y))=\sum_{y\in\mathcal{Y}}\mathbb{P}(Y=y)h(\mathbf{s}'(z,y)).
\end{align}
As a result, Bellman's equation in \eqref{bell1'} can be rewritten as 
\begin{equation}\label{bell2'}
\!\!\!\!\lambda=\min_{z}\!\left( \!C(\mathbf{s},z)\!+\sum_{y\in\mathcal{Y}}\mathbb{P}(Y=y)(h(\mathbf{s}'(z,y))\!-\!h(\mathbf{s}))\!\right).
\end{equation}
Although $h(\mathbf{s})$ is discrete, we can interpolate the value of $h(\mathbf{s})$ between the discrete values so that it is differentiable by following the same approach in \cite{bettesh2006optimal} and \cite{wangdelay}. Let $\mathbf{s}=(a_{[1]},\ldots,a_{[m]})$, then using the first order Taylor approximation around a state $\mathbf{v}=(a_{[1]}^v,\ldots,a_{[m]}^v)$ (some fixed state), we get
\begin{equation}\label{taylor_exp1}
h(\mathbf{s})\approx h(\mathbf{v})+\sum_{l=1}^{m}(a_{[l]}-a_{[l]}^v)\frac{\partial h(\mathbf{v})}{\partial a_{[l]}}.
\end{equation}
Again, we use the first order Taylor approximation around the state $\mathbf{v}$, together with the state evolution in \eqref{state_evol}, to get
\begin{equation}\label{taylor_exp2}
\begin{split}
h(\mathbf{s}'(z,y))\approx h(\mathbf{v})+(y-a_{[m]}^v)\frac{\partial h(\mathbf{v})}{\partial a_{[m]}}+\sum_{l=1}^{m-1}(a_{[l+1]}-a_{[l]}^v+z+y)\frac{\partial h(\mathbf{v})}{\partial a_{[l]}}.
\end{split}
\end{equation}
From \eqref{taylor_exp1} and \eqref{taylor_exp2}, we get
\begin{equation}
\begin{split}
h(\mathbf{s}'(z,y))-h(\mathbf{s})\approx  (y-a_{[m]})\frac{\partial h(\mathbf{v})}{\partial a_{[m]}}+\sum_{l=1}^{m-1}(a_{[l+1]}-a_{[l]}+z+y)\frac{\partial h(\mathbf{v})}{\partial a_{[l]}}.
\end{split}
\end{equation}
This implies that
 \begin{equation}\label{taylor1}
\begin{split}
\sum_{y\in\mathcal{Y}}\mathbb{P}(Y=y)(h(\mathbf{s}'(z,y))-h(\mathbf{s})) \approx(\mathbb{E}[Y]-a_{[m]})\frac{\partial h(\mathbf{v})}{\partial a_{[m]}}+\sum_{l=1}^{m-1}(a_{[l+1]}-a_{[l]}+z+\mathbb{E}[Y])\frac{\partial h(\mathbf{v})}{\partial a_{[l]}}.
\end{split}
\end{equation}
Using \eqref{bell2'} with \eqref{taylor1}, we can get the following approximated Bellman's equation: 
\begin{equation}\label{approx1}
\begin{split}
\lambda\approx\min_z ~\bigg(C(\mathbf{s},z)+(\mathbb{E}[Y]-a_{[m]})\frac{\partial h(\mathbf{v})}{\partial a_{[m]}}+\sum_{l=1}^{m-1}(a_{[l+1]}-a_{[l]}+z+\mathbb{E}[Y])\frac{\partial h(\mathbf{v})}{\partial a_{[l]}}\bigg).
\end{split}
\end{equation}
By following the same steps as in Appendix \ref{Appendix_E} to get the optimal $z$ that minimizes the objective function in \eqref{approx1}, we get the following condition: The optimal $z$, for a given state $\mathbf{s}$, must satisfy
\begin{align}
&\mathbb{E}_Y\left[\sum_{l=1}^mg(a_{[l]}+t+Y)\right]-\bar{\Delta}_{\text{avg-opt}}+\sum_{l=1}^{m-1}\frac{\partial h(\mathbf{v})}{\partial a_{[l]}}\geq 0\label{eqder_bel_39}
\end{align}
for all $t>z$, and 
\begin{align}
\mathbb{E}_Y\left[\sum_{l=1}^mg(a_{[l]}+t+Y)\right]-\bar{\Delta}_{\text{avg-opt}}+\sum_{l=1}^{m-1}\frac{\partial h(\mathbf{v})}{\partial a_{[l]}}\leq 0\label{eqder_bel_40}
\end{align}
for all $t<z$. The smallest $z$ that satisfies \eqref{eqder_bel_39}-\eqref{eqder_bel_40} is 
\begin{equation}\label{eqder_bell_41}
\begin{split}
\hat{z}^\star_s=\inf\bigg\{ t\geq 0 : \mathbb{E}_Y\left[\sum_{l=1}^m g(a_{[l]}+t+Y)\right]
\geq\bar{\Delta}_{\text{avg-opt}}-\sum_{l=1}^{m-1}\frac{\partial h(\mathbf{v})}{\partial a_{[l]}}\bigg\},
\end{split}
\end{equation}
where $\hat{z}^*_s$ is the optimal solution of the approximated Bellman's equation for state $\mathbf{s}$. Note that the term  $\sum_{i=1}^{m-1}\frac{\partial h(\mathbf{v})}{\partial a_{[i]}}$ is constant. Hence, \eqref{eqder_bell_41} can be rewritten as
\begin{equation}\label{eqder_bell_42}
\hat{z}^\star_s=\inf\left\lbrace t\geq 0 : \mathbb{E}_Y\left[\sum_{l=1}^mg(a_{[l]}+t+Y)\right]
\geq T\right\rbrace. 
\end{equation}
This simple threshold sampler can approximate the optimal sampler for the original Bellman's equation in \eqref{bell1'}. The optimal threshold ($T$) in \eqref{eqder_bell_42} can be obtained using a golden-section method \cite{press1992golden}. Moreover, for a given state $\mathbf{s}$ and the threshold ($T$), \eqref{eqder_bell_42} can be solved using the bisection method or any other one-dimensional search method.

\textbf{Low-complexity Water-filling Sampler}:
Consider the case that $g(x)=x$, the solution in \eqref{eqder_bell_42} can be further simplified. Substituting $g(x)=x$ into \eqref{eqder_bell_41}, where the equality holds in this case, we get the following condition: The optimal $z$ in this case, for a given state $\mathbf{s}$, must satisfy
\begin{equation}\label{opt}
A_s-\bar{\Delta}_{\text{avg-opt}}+mz+m\mathbb{E}[Y]+\sum_{l=1}^{m-1}\frac{\partial h(\mathbf{v})}{\partial a_{[l]}}=0,
\end{equation}
where $A_s$ is the sum of the age values of state $\mathbf{s}$.
Rearranging \eqref{opt}, we get
\begin{equation}\label{optimal_sol1}
\hat{z}^*_s=\left[\frac{\bar{\Delta}_{\text{avg-opt}}-m\mathbb{E}[Y]-\sum_{l=1}^{m-1}\frac{\partial h(\mathbf{v})}{\partial a_{[l]}}}{m}-\frac{A_s}{m} \right]^+.
\end{equation}
By observing that the term  $\sum_{i=1}^{m-1}\frac{\partial h(\mathbf{v})}{\partial a_{[i]}}$ is constant, \eqref{optimal_sol1} can be rewritten as
\begin{equation}\label{fin_approx_sol}
\hat{z}^*_s=\left[T-\frac{A_s}{m} \right]^+,
\end{equation}
 The solution in \eqref{fin_approx_sol} is in the form of the water-filling solution as we compare a fixed threshold ($T$) with the average age of a state $\mathbf{s}$. The solution in \eqref{fin_approx_sol} suggests that this simple water-filling sampler can approximate the optimal solution of the original Bellman's equation in \eqref{bell1'} when $g(x)=x$. Similar to the general case, the optimal threshold ($T$) in \eqref{fin_approx_sol} can be obtained using a golden-section method. We evaluate the performance of the 
approximated samplers in the next section. 

\section{Numerical Results}\label{Simulations}
We present numerical results to evaluate our proposed solutions. We consider an information update system with $m=3$ sources. We use ``RAND" to represent a random scheduler, where sources are chosen to be served with equal probability. By  ``Constant-wait", we refer to the sampler that imposes a constant waiting time after each delivery with $Z_i=0.3\mathbb{E}[Y],~\forall i$. Moreover, we use ``Threshold'' and ``Water-filling'' to denote the proposed samplers in \eqref{eqder_bell_42} and \eqref{fin_approx_sol}, respectively.

\begin{figure}[t]
\centering
\includegraphics[scale=0.4]{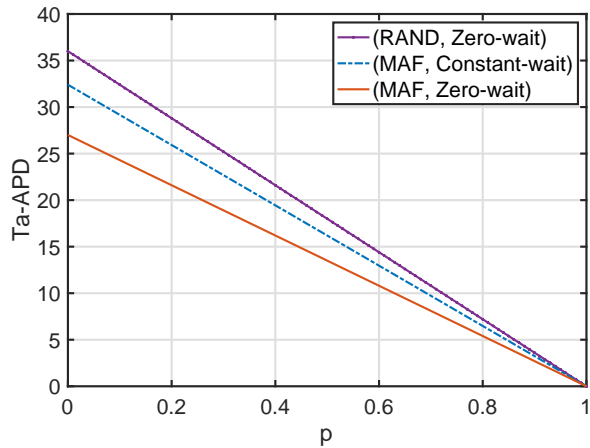}
\caption{Ta-APD versus the probability $p$ for an update system with $m=3$ sources, where $g(x)=x$.}
\label{peak_age_sim}
\end{figure}
We set the transmission times to be either 0 or 3 with probability $p$ and $1-p$, respectively. Fig. \ref{peak_age_sim} illustrates the Ta-APD versus the probability $p$, where we have $g(x)=x$. As we can observe, with fixing the sampler to the zero-wait one,   the MAF scheduler provides a lower Ta-APD than that of the RAND scheduler. Moreover, with fixing the scheduling policy to the MAF scheduler, the zero-wait sampler provides a lower Ta-APD compared to the constant-wait sampler. These observations agree with Theorem \ref{thm_tapa}. However, as we will see later, zero-wait sampler does not always minimize the Ta-AP. 

\begin{figure}[t]
\centering
\includegraphics[scale=0.4]{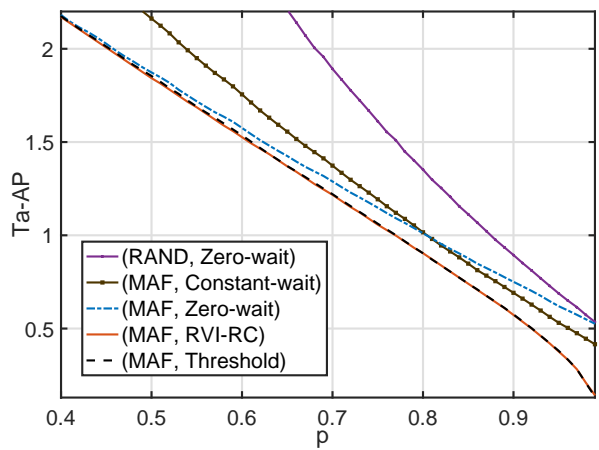}
\caption{Ta-AP versus the probability $p$ for an update system with $m=3$ sources, where $g(x)=e^{0.1x}-1$.}
\label{avg_age_penalty_expo_service_prob}
\end{figure}
\begin{figure}[t]
\centering
\includegraphics[scale=0.4]{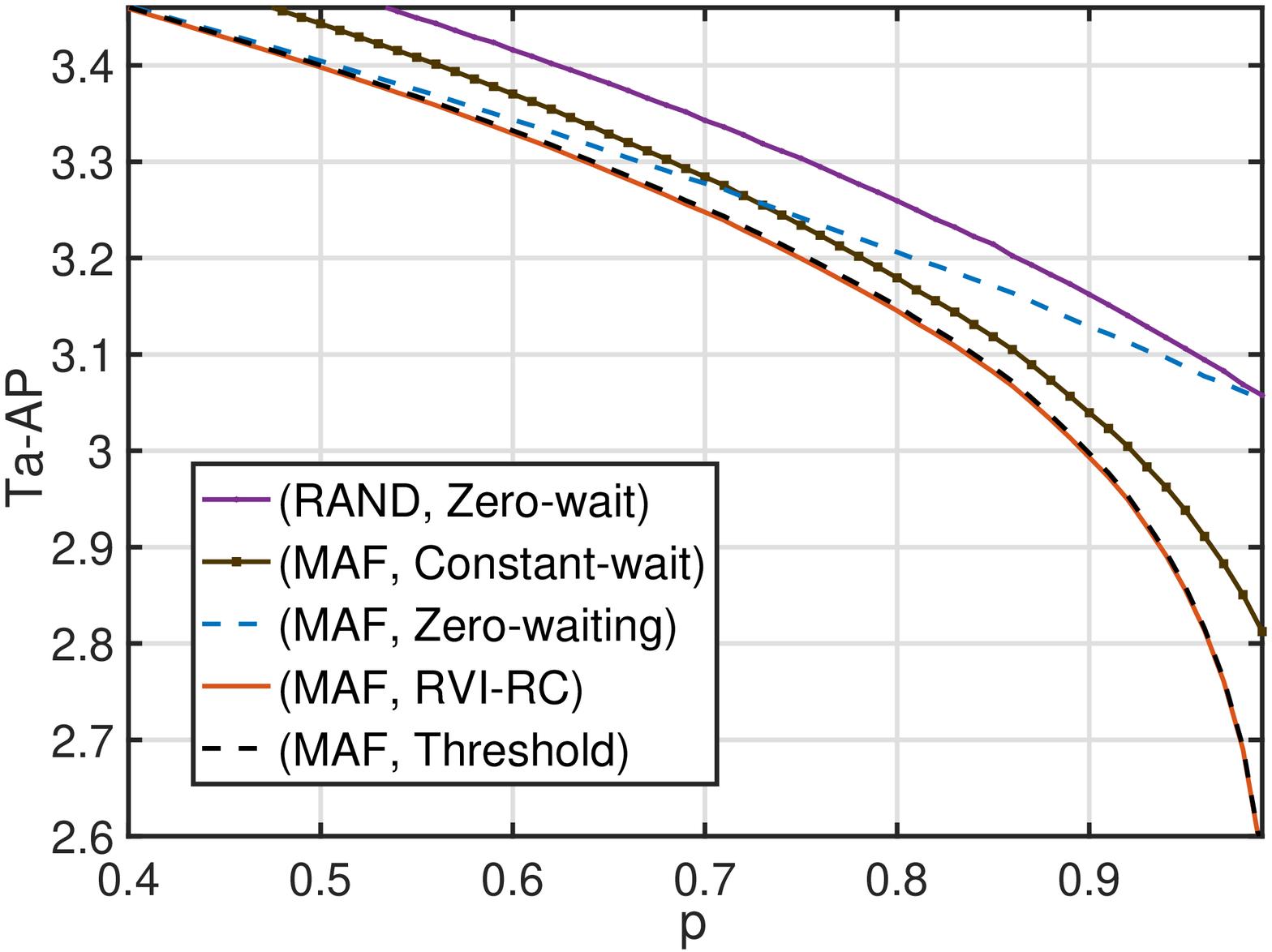}
\caption{Ta-AP versus the probability $p$ for an update system with $m=3$ sources, where $g(x)=x^{0.1}$.}
\label{avg_age_penalty_power_service_prob}
\end{figure}
\begin{figure}[t]
\centering
\includegraphics[scale=0.4]{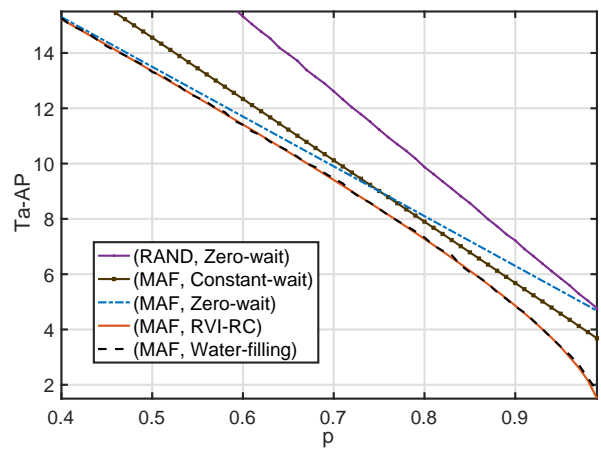}
\caption{Ta-AP versus the probability $p$ for an update system with $m=3$ sources, where $g(x)=x$.}
\label{avg_age_sim}
\end{figure}
We now evaluate the performance of our proposed solutions for minimizing the Ta-AP. We set the transmission times to be either 0 or 3 with probability $p$ and $1-p$, respectively. Figs. \ref{avg_age_penalty_expo_service_prob}, \ref{avg_age_penalty_power_service_prob}, and \ref{avg_age_sim} illustrate the Ta-AP versus the probability $p$, where we set the age-penalty function $g(x)$ to be $e^{0.1x}-1$, $x^{0.1}$, and $x$, respectively. The range of the probability $p$ is $[0.4, 0.99]$ in Figs. \ref{avg_age_penalty_expo_service_prob}, \ref{avg_age_penalty_power_service_prob}, and \ref{avg_age_sim}. When $p=1$, $\mathbb{E}[Y] = \mathbb{E}[Y^2] = 0$ and hence the Ta-AP of the zero-wait sampler (for any scheduler) at $p=1$ is undefined. Therefore, the point $p=1$ is not plotted in Figs. \ref{avg_age_penalty_expo_service_prob}, \ref{avg_age_penalty_power_service_prob}, and \ref{avg_age_sim}. For the zero-wait sampler, we find that the MAF scheduler provides a lower Ta-AP than that of the RAND scheduler. This agrees with Proposition \ref{Thm1}. Moreover, when the scheduling policy is fixed to the MAF scheduler, we find that the Ta-AP resulting from the RVI-RC sampler is lower than those resulting from the zero-wait sampler and the constant-wait sampler. This observation suggests the following: i) The zero-wait sampler does not necessarily minimize the Ta-AP, ii) optimizing the scheduling policy only is not enough to minimize the Ta-AP, but we have to optimize both the scheduling policy and the sampling policy together to minimize the Ta-AP. 
 In addition, as we can observe, the Ta-AP resulting from the threshold sampler in Figs. \ref{avg_age_penalty_expo_service_prob} and \ref{avg_age_penalty_power_service_prob}, and the  water-filling sampler in Fig. \ref{avg_age_sim} almost coincides with the Ta-AP resulting from the RVI-RC sampler. 

\begin{figure}[t]
\centering
\includegraphics[scale=0.4]{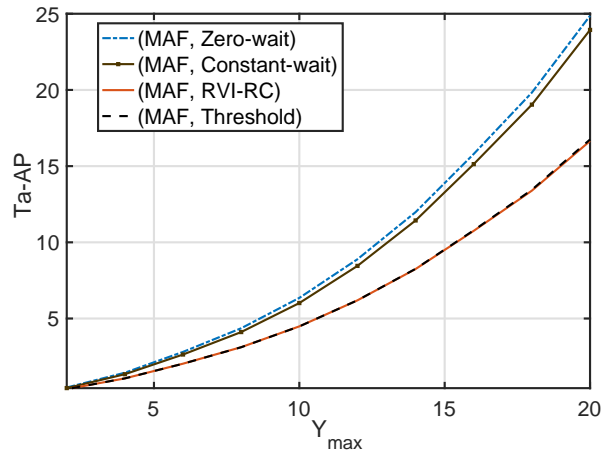}
\caption{Ta-AP versus the maximum service time $Y_{\text{max}}$ for an update system with $m=3$ sources, where $g(x)=e^{0.1x}-1$.}
\label{avg_age_penalty_expo_service_time}
\end{figure}
\begin{figure}[t]
\centering
\includegraphics[scale=0.4]{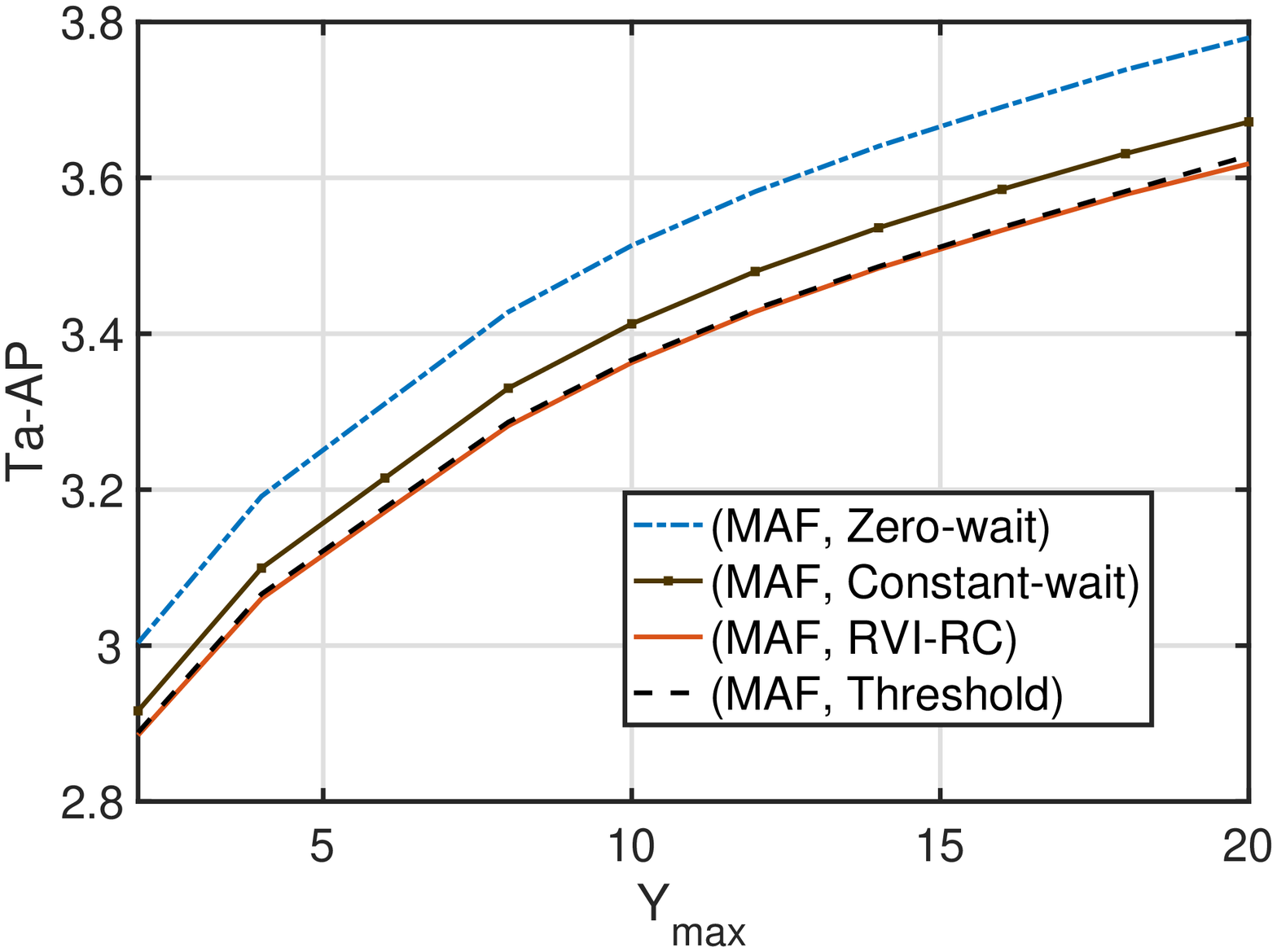}
\caption{Ta-AP versus the maximum service time $Y_{\text{max}}$ for an update system with $m=3$ sources, where $g(x)=x^{0.1}$.}
\label{avg_age_penalty_power_service_time}
\end{figure}
\begin{figure}[t]
\centering
\includegraphics[scale=0.4]{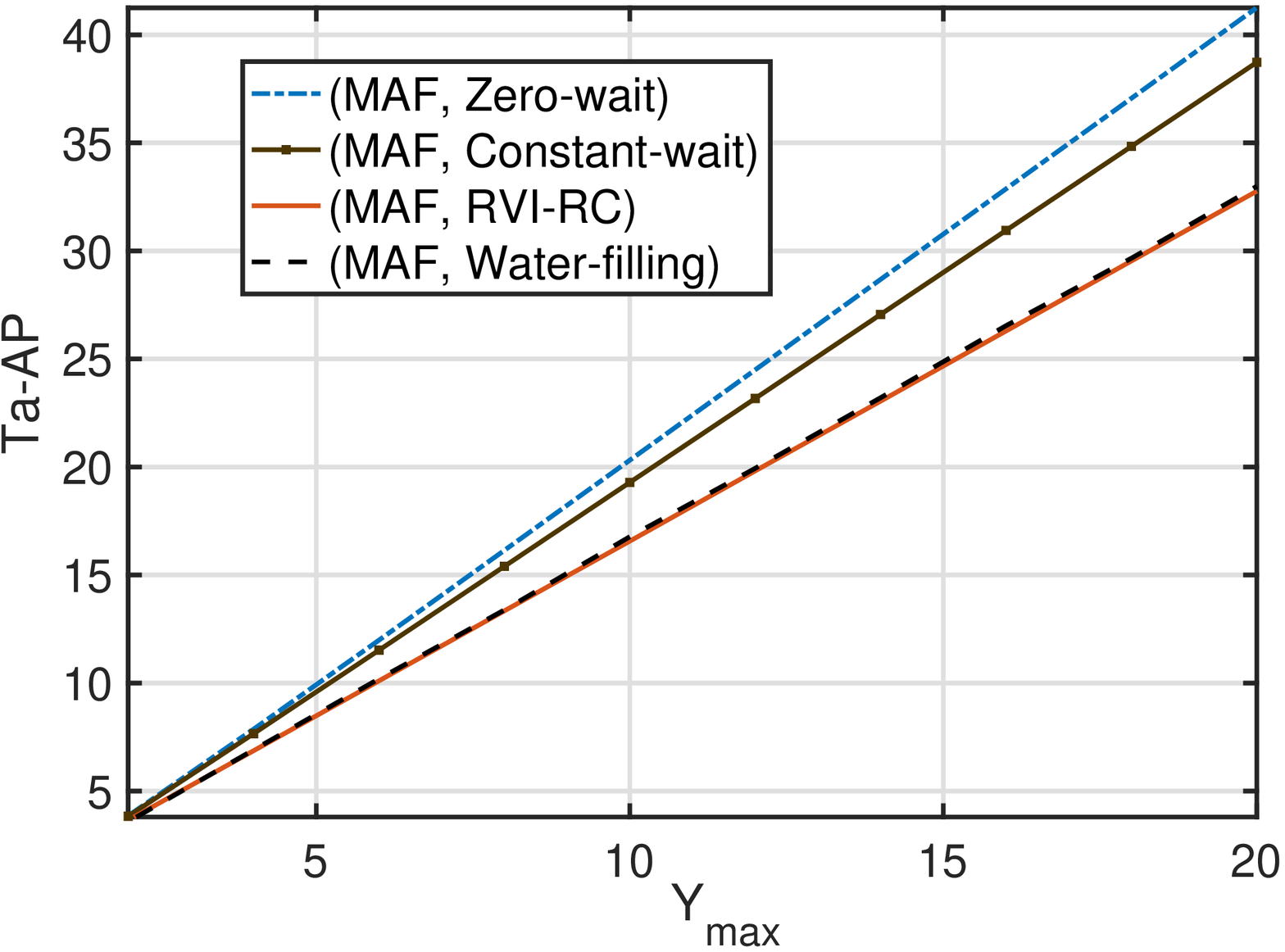}
\caption{Ta-AP versus the maximum service time $Y_{\text{max}}$ for an update system with $m=3$ sources, where $g(x)=x$.}
\label{avg_age_penalty_service_time}
\end{figure}
We then set the transmission times to be either 0 or $Y_{\text{max}}$ with probability $0.9$ and $0.1$, respectively. We vary the maximum transmission time $Y_{\text{max}}$ and plot the Ta-AP in Figs.   \ref{avg_age_penalty_expo_service_time},  \ref{avg_age_penalty_power_service_time}, and \ref{avg_age_penalty_service_time}, where $g(x)$ is set to be  $e^{0.1x}-1$, $x^{0.1}$, and $x$, respectively. The scheduling policy is fixed to the MAF scheduler in all plotted curves. We can observe in all figures that the Ta-AP resulting from the RVI-RC sampler is lower than those resulting from the zero-wait sampler and the constant-wait sampler, and the gap between them increases as the variability (variance) of the transmission times increases. This suggests that when the transmission times have a big variation, we have to optimize the scheduler and the sampler together to minimize the Ta-AP. We also can observe that the Ta-AP of the  threshold sampler in Figs. \ref{avg_age_penalty_expo_service_time} and \ref{avg_age_penalty_power_service_time}, and the water-filling sampler in Fig. \ref{avg_age_penalty_service_time} almost coincides with that of the RVI-RC sampler.

\begin{figure}[t]
\centering
\includegraphics[scale=0.4]{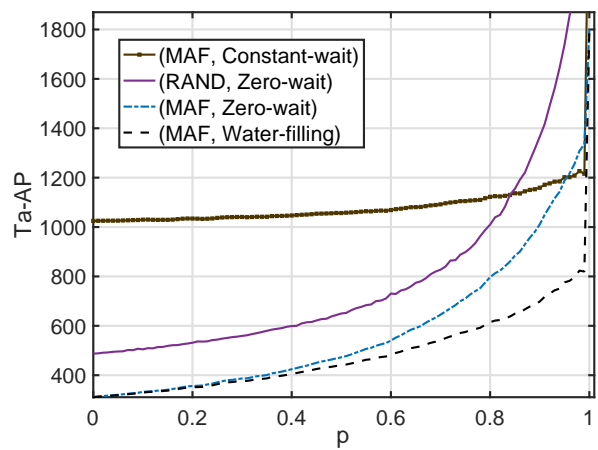}
\caption{Ta-AP versus the parameter $\sigma$ of the transmission time Markov chain for an update system with $m=10$ sources, where $g(x)=x$.}
\label{avg_age_penalty_service_time_correlated}
\end{figure}

Finally, we consider a larger scale update system with $m=10$. We model the transmission time as a discrete Markov chain with a probability mass function $\mathbb{P}[Y_i=1]=0.9$ and $\mathbb{P}[Y_i=30]=0.1$, and  a transition matrix
\begin{equation}
\begin{bmatrix}
\frac{8}{9}+\frac{\sigma}{9} & 1-\frac{8}{9}-\frac{\sigma}{9}\\
1-\sigma & \sigma
\end{bmatrix}.
\end{equation}
Fig. \ref{avg_age_penalty_service_time_correlated} illustrates the Ta-AP versus the transition matrix parameter $\sigma$, where $g(x)=x$. As we can observe, the (MAF, water-filling) policy provides the lowest Ta-AP compared to all plotted policies. Also, when $\sigma=1$, the transmission time reduces to be a constant time. We can observe that, when the scheduling policy is the MAF,  the Ta-APs achieved by the zero-wait and water-filling samplers are equal. This agrees with Corollary \ref{zero_opt_time_slotted}.
\section{Conclusion}\label{Concl}
In this work, we studied the problem of finding the optimal decision policy that controls the packet generation times and transmission order of the sources to minimize the Ta-APD and Ta-AP in a multi-source information update system. We showed that the MAF scheduler and the zero-wait sampler are jointly optimal for minimizing the Ta-APD. Moreover, we showed that the MAF scheduler and the RVI-RC sampler, which results from reducing the computation complexity of the RVI algorithm, are jointly optimal for minimizing the Ta-AP. Finally, we devised a low-complexity threshold sampler via an approximate analysis of Bellman's equation. This threshold sampler is further simplified to a simple water-filling sampler in the special case of linear age-penalty function. The numerical results showed that the performance of these approximated samplers is almost the same as that of the RVI-RC sampler.

\section{Appendix}
\appendices
\section{Proof of Proposition \ref{Thm1}}\label{Appendix_A}
We will need the following definitions:  A set $U\subseteq \mathbb{R}^n$ is called upper if $\mathbf{y}\in U$ whenever $\mathbf{y}\geq\mathbf{x}$ and $\mathbf{x}\in U$.
\begin{definition} \textbf{ Univariate Stochastic Ordering:} \cite{shaked2007stochastic} Let $X$ and $Y$ be two random variables. Then, $X$ is said to be stochastically smaller than $Y$ (denoted as $X\leq_{\text{st}}Y$), if
\begin{equation*}
\begin{split}
\mathbb{P}\{X>x\}\leq \mathbb{P}\{Y>x\}, \quad \forall  x\in \mathbb{R}.
 \end{split}
\end{equation*}
\end{definition}
\begin{definition}\label{def_2} \textbf{Multivariate Stochastic Ordering:} \cite{shaked2007stochastic} 
Let $\mathbf{X}$ and $\mathbf{Y}$ be two random vectors. Then, $\mathbf{X}$ is said to be stochastically smaller than $\mathbf{Y}$ (denoted as $\mathbf{X}\leq_\text{st}\mathbf{Y}$), if
\begin{equation*}
\begin{split}
\mathbb{P}\{\mathbf{X}\in U\}\leq \mathbb{P}\{\mathbf{Y}\in U\}, \quad \text{for all upper sets} \quad U\subseteq \mathbb{R}^n.
 \end{split}
\end{equation*}
\end{definition}
\begin{definition} \textbf{ Stochastic Ordering of Stochastic Processes:} \cite{shaked2007stochastic} Let $\{X(t), t\in [0,\infty)\}$ and $\{Y(t), t\in[0,\infty)\}$ be two stochastic processes. Then, $\{X(t), t\in [0,\infty)\}$ is said to be stochastically smaller than $\{Y(t), t\in [0,\infty)\}$ (denoted by $\{X(t), t\in [0,\infty)\}\leq_\text{st}\{Y(t), t\in [0,\infty)\}$), if, for all choices of an integer $n$ and $t_1<t_2<\ldots<t_n$ in $[0,\infty)$, it holds that
\begin{align}\label{law9'}
\!\!\!(X(t_1),X(t_2),\ldots,X(t_n))\!\leq_\text{st}\!(Y(t_1),Y(t_2),\ldots,Y(t_n)),\!\!
\end{align}
where the multivariate stochastic ordering in \eqref{law9'} was defined in Definition \ref{def_2}.
\end{definition}

Now, we prove Proposition \ref{Thm1}. Let the vector $\mathbf{\Delta}_\pi(t)=(\Delta_{[1],\pi}(t),\ldots, \Delta_{[m],\pi}(t))$ denote the system
state at time $t$ of the scheduler $\pi$,  where $\Delta_{[l],\pi}(t)$ is the $l$-th largest age of the sources at time $t$ under the scheduler $\pi$. Let $\{\mathbf{\Delta}_\pi(t), t\geq 0\}$ denote the state process of the scheduler $\pi$. 
For notational simplicity, let  $P$ represent the MAF scheduler. Throughout the proof, we assume that $\mathbf{\Delta}_\pi(0^-)=\mathbf{\Delta}_P(0^-)$ for all $\pi$ and the sampler is fixed to an arbitrarily chosen one. The key step in the proof of Proposition \ref{Thm1} is the following lemma, where we compare the scheduler $P$ with any arbitrary scheduler $\pi$.

\begin{lemma}\label{lem1thm1}
Suppose that $\mathbf{\Delta}_\pi(0^-)=\mathbf{\Delta}_P(0^-)$ for all scheduler $\pi$ and the sampler is fixed, then we have 
\begin{equation}\label{lem1thm1eq1}
\{\mathbf{\Delta}_P(t), t\geq 0\}\leq_{\text{st}}\{\mathbf{\Delta}_\pi(t), t\geq 0\}
\end{equation}
\end{lemma}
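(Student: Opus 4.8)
The plan is to establish the sample-path comparison by a coupling argument, then lift the pathwise domination to the stochastic-process ordering in \eqref{lem1thm1eq1}. First I would couple the two systems by feeding them the same sequence of service times $Y_i$ and the same sampler decisions (waiting times $Z_i$), so that the packet generation times $S_i$ and delivery times $D_i$ coincide under both $P$ and $\pi$; this is legitimate since the sampler is held fixed and, by the separation-principle setup, the sampler's action at stage $i$ is a function of the ordered age vector, which we will control inductively. Under this coupling everything is deterministic given $(Y_i, Z_i)$, so it suffices to prove the pathwise statement $\mathbf{\Delta}_P(t) \leq \mathbf{\Delta}_\pi(t)$ componentwise (in the sorted sense, i.e. $\Delta_{[l],P}(t)\le \Delta_{[l],\pi}(t)$ for every $l$) for all $t\ge 0$.

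The core is an induction over the delivery epochs $D_0=0 < D_1 < D_2 < \cdots$, with the induction hypothesis that $\Delta_{[l],P}(D_i^-)\le \Delta_{[l],\pi}(D_i^-)$ for all $l$. Between consecutive deliveries no source's age is reset, so every age simply grows linearly at unit rate; hence the ordering is preserved on the open interval, and the only thing to check is what happens at a delivery instant $D_{i+1}$. At $D_{i+1}$ a packet generated from some source is delivered, and that source's age is reset to $Y_{i+1}$ (the same value under both policies, by the coupling), while all other ages are unchanged. Under $P=$ MAF the reset is applied to the source holding the \emph{largest} age, i.e. to the top component $\Delta_{[1],P}$; under $\pi$ the reset is applied to some (possibly different) source. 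The key combinatorial fact is: if $\mathbf{x}\le\mathbf{y}$ componentwise after sorting, and we form $\mathbf{x}'$ from $\mathbf{x}$ by replacing its largest entry with a value $c$, and $\mathbf{y}'$ from $\mathbf{y}$ by replacing \emph{any one} entry with the same $c$, then after re-sorting we still have $\mathbf{x}'\le\mathbf{y}'$ componentwise. I would prove this elementary lemma directly (it is the ``MAF dominates'' fact: resetting the maximum is the best possible single reset), and apply it with $c=Y_{i+1}$ at $t=D_{i+1}$ to close the induction; the base case is the assumption $\mathbf{\Delta}_P(0^-)=\mathbf{\Delta}_\pi(0^-)$.

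Having the pathwise domination under the coupling, the stochastic-process ordering follows from the standard fact (Theorem~6.B.? / the coupling characterization of $\leq_{\text{st}}$ in \cite{shaked2007stochastic}) that existence of a coupling with almost-sure componentwise domination at every finite collection of times implies $\{\mathbf{\Delta}_P(t)\}\leq_{\text{st}}\{\mathbf{\Delta}_\pi(t)\}$; one simply reads off \eqref{law9'} at arbitrary $t_1<\cdots<t_n$. I expect the main obstacle to be bookkeeping rather than depth: one must be careful that the sampler's decisions really can be coupled identically even though they depend on the current (ordered) state — this needs the observation that MAF keeps the \emph{multiset} of ages no larger, and that the sampler in the fixed policy $f$ is applied to whichever ordered-state arises, so the coupling is constructed jointly with the induction rather than a priori. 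A secondary subtlety is handling ties in the MAF tie-breaking rule and ensuring the "replace the maximum" lemma is stated for sorted vectors with repeated entries; both are routine once the right lemma is isolated. Finally, Proposition~\ref{Thm1} itself then follows because $g$ is non-decreasing, so $\sum_l g(\Delta_{[l],P}(t))\le \sum_l g(\Delta_{[l],\pi}(t))$ pathwise, and integrating / summing and taking expectations and $\limsup$ yields \eqref{Thm1eq1b}–\eqref{Thm1eq2b}.
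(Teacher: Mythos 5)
Your proposal matches the paper's proof essentially step for step: the same coupling on the service times (hence identical generation and delivery times since the sampler is fixed), the same pathwise induction over delivery epochs in which ages grow linearly between deliveries, the same key comparison lemma (your ``replace the maximum vs.\ replace any entry'' fact is exactly the paper's Inductive Comparison Lemma, proved via $\Delta_{[i],\pi}'\geq\Delta_{[i+1],\pi}\geq\Delta_{[i+1],P}=\Delta_{[i],P}'$ with the common reset value $t-S$), and the same appeal to Theorem 6.B.30 of \cite{shaked2007stochastic} to lift the almost-sure pathwise domination to the stochastic ordering in \eqref{lem1thm1eq1}. The approach is correct and no further comparison is needed.
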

We use a coupling and forward induction to prove Lemma \ref{lem1thm1}. For any scheduler $\pi$, suppose that the stochastic processes $\widetilde{\mathbf{\Delta}}_P(t)$ and $\widetilde{\mathbf{\Delta}}_\pi(t)$ have the same stochastic laws as $\mathbf{\Delta}_P(t)$ and $\mathbf{\Delta}_\pi(t)$. The state processes $\widetilde{\mathbf{\Delta}}_P(t)$ and $\widetilde{\mathbf{\Delta}}_\pi(t)$ are coupled such that the packet service times are equal under both scheduling policies, i.e., $Y_i$'s are the same under both scheduling policies. Such a coupling is valid since the service time distribution is fixed under all policies. Since the sampler is fixed, such a coupling implies that the packet generation and delivery times are the same under both schedulers. According to Theorem 6.B.30 of \cite{shaked2007stochastic}, if we can show 
\begin{equation}\label{Thm1eq1}
\mathbb{P}\left[\widetilde{\mathbf{\Delta}}_P(t)\leq\widetilde{\mathbf{\Delta}}_\pi(t), t\geq 0\right]=1,
\end{equation}
then \eqref{lem1thm1eq1} is proven. To ease the notational burden, we will omit the tildes on the coupled versions in this proof and just use $\mathbf{\Delta}_P(t)$ and $\mathbf{\Delta}_\pi(t)$. Next, we compare scheduler $P$ and scheduler $\pi$ on a sample path and prove \eqref{lem1thm1eq1} using the following lemma:
\begin{lemma}[Inductive Comparison]\label{lem2thm1}
Suppose that a packet with generation time $S$ is delivered under the scheduler $P$ and the scheduler $\pi$ at the same time $t$. The system state of the scheduler $P$ is $\mathbf{\Delta}_P$ before the packet delivery, which becomes $\mathbf{\Delta}'_P$ after the packet delivery. The system state of the scheduler $\pi$ is $\mathbf{\Delta}_\pi$ before the packet delivery, which becomes $\mathbf{\Delta}'_\pi$ after the packet delivery. If
\begin{equation}\label{lem2thm1eq1}
\Delta_{[i],P}\leq \Delta_{[i],\pi}, i=1,\ldots,m,
\end{equation}
then
\begin{equation}\label{lem2thm1eq2}
\Delta_{[i],P}'\leq \Delta_{[i],\pi}', i=1,\ldots,m.
\end{equation}
\end{lemma}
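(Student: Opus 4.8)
The plan is to prove Lemma~\ref{lem2thm1} by a direct case analysis on the scheduler $\pi$, using the fact that the delivered packet and its generation time $S$ (hence its service-completion age $t-S$) are identical under both schedulers by the coupling. Let me set up notation: suppose the delivered packet is the $k$-th generated packet, with generation time $S=S_k$ and delivery time $t=D_k$, so its age immediately after delivery is $a:=t-S_k=Y_k$. Under the MAF scheduler $P$, this packet is generated from the source with the \emph{maximum} age, i.e.\ the source whose age just before delivery equals $\Delta_{[1],P}$; after delivery that source's age becomes $a$, and all other sources' ages are unchanged. Under $\pi$, the packet is generated from \emph{some} source $j$ whose age just before delivery is $\Delta_{[j],\pi}$ for some index $j$ (not necessarily the largest); after delivery that source's age becomes $a$ and the rest are unchanged.

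The key observation is how the sorted (descending) age vector changes when exactly one coordinate is reset to $a$. For $P$: we remove the largest entry $\Delta_{[1],P}$ from the multiset and insert $a$. For $\pi$: we remove some entry $\Delta_{[j],\pi}$ (the $j$-th largest) and insert $a$. First I would record the hypothesis \eqref{lem2thm1eq1}: $\Delta_{[i],P}\le\Delta_{[i],\pi}$ for all $i$. Then I would argue the two cases for where $a$ falls in the sorted order, or more cleanly, use the following monotonicity fact: if $\mathbf{x}\le\mathbf{y}$ componentwise after sorting both in descending order, and we form $\mathbf{x}'$ by deleting the top entry of $\mathbf{x}$ and appending $a$, and $\mathbf{y}'$ by deleting \emph{any} one entry of $\mathbf{y}$ and appending $a$, then $\mathbf{x}'\le\mathbf{y}'$ after re-sorting in descending order. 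The reason: deleting the maximum from $\mathbf{x}$ produces the componentwise-smallest re-sorted vector among all single-deletions from $\mathbf{x}$, while $\mathbf{y}$ after deleting its $j$-th largest entry still dominates $\mathbf{x}$ after deleting its largest entry (because the sorted sequence of the $m-1$ retained entries of $\mathbf{y}$ dominates, term by term, the sorted sequence of the bottom $m-1$ entries of $\mathbf{x}$); appending the common value $a$ to both and re-sorting preserves the componentwise inequality, since inserting the same number into two sorted sequences that are already componentwise-ordered keeps them componentwise-ordered.

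Concretely I would prove the retained-entries claim as follows. Let the descending ages of $P$ before delivery be $p_1\ge p_2\ge\cdots\ge p_m$ and those of $\pi$ be $q_1\ge\cdots\ge q_m$ with $p_i\le q_i$ for all $i$. After $P$'s reset the retained entries (before inserting $a$) are $p_2\ge p_3\ge\cdots\ge p_m$, i.e.\ the bottom $m-1$. After $\pi$'s reset the retained entries are $q_1\ge\cdots\ge q_{j-1}\ge q_{j+1}\ge\cdots\ge q_m$; sorted in descending order these are $q_1,\dots,q_{j-1},q_{j+1},\dots,q_m$, whose $i$-th largest is $q_i$ for $i<j$ and $q_{i+1}$ for $i\ge j$. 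Comparing to $P$'s retained $i$-th largest which is $p_{i+1}$: for $i<j$ we need $p_{i+1}\le q_i$, which holds since $p_{i+1}\le q_{i+1}\le q_i$; for $i\ge j$ we need $p_{i+1}\le q_{i+1}$, which is the hypothesis. So the sorted retained vector of $P$ is $\le$ that of $\pi$ componentwise. Finally, inserting the common value $a$ into both length-$(m-1)$ sorted sequences and re-sorting: this is the standard fact that the "sorted insertion" map is monotone — if $\mathbf{u}\le\mathbf{v}$ componentwise (both descending, length $m-1$) then $\mathrm{sort}(\mathbf{u}\cup\{a\})\le\mathrm{sort}(\mathbf{v}\cup\{a\})$ componentwise; I would include a one-line justification (compare the $i$-th largest entry of each: it is either the corresponding old entry or $a$, and a small case check using $u_\ell\le v_\ell$ closes it). This gives \eqref{lem2thm1eq2}.

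The main obstacle, and the part that needs care rather than cleverness, is the bookkeeping in the "sorted insertion is monotone" and "deleting the max is optimal" steps — getting the index shifts right for the case $i\ge j$ versus $i<j$, and making sure the argument does not secretly assume $\pi$ also deletes the maximum. Once Lemma~\ref{lem2thm1} is established, the induction for Lemma~\ref{lem1thm1} is routine: the base case holds by the assumption $\mathbf{\Delta}_P(0^-)=\mathbf{\Delta}_\pi(0^-)$, and between consecutive delivery times the age vectors of both schedulers increase by the same deterministic linear drift (all coordinates $+\,$elapsed time) so the componentwise inequality is preserved; at each delivery time Lemma~\ref{lem2thm1} propagates it across the jump. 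This yields \eqref{Thm1eq1}, hence \eqref{lem1thm1eq1} via Theorem~6.B.30 of \cite{shaked2007stochastic}. Then, since $g$ is non-decreasing, $\sum_{l}g(\Delta_{[l],P}(t))\le\sum_l g(\Delta_{[l],\pi}(t))$ pathwise at every $t$ (sum of a non-decreasing function over componentwise-ordered vectors), and both the delivery-time objective \eqref{peak_age_def} and the time-average objective \eqref{avg_age_def} are monotone functionals of this pathwise-ordered process, so taking expectations and $\limsup$ gives \eqref{Thm1eq1b} and \eqref{Thm1eq2b}; here I would note that for \eqref{peak_age_def} one evaluates at $D_i^-$, which is fine because the inequality holds at \emph{all} $t$ including left-limits.
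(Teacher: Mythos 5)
Your proof is correct and follows essentially the same route as the paper's: couple the two schedulers so the delivery happens simultaneously, observe that MAF resets the maximum entry of the sorted age vector while $\pi$ resets the $j$-th largest, and verify the componentwise comparison of the resulting sorted vectors (your index-shift computation for $i<j$ versus $i\ge j$ is exactly the paper's inequality $\Delta'_{[i],\pi}\ge\Delta_{[i+1],\pi}\ge\Delta_{[i+1],P}=\Delta'_{[i],P}$). The only cosmetic difference is that the paper handles the inserted value by noting that $t-S$ is necessarily the \emph{minimum} age in both systems after delivery (so the last coordinates are equal), whereas your general ``sorted insertion is monotone'' lemma makes the last step without needing that model-specific fact.
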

Lemma \ref{lem2thm1}  is proven by following the proof idea of \cite[Lemma 2]{Yin_multiple_flows}. For the sake of completeness, we provide proof of Lemma \ref{lem2thm1}  as follows: 
\begin{proof}
Since only one source can be scheduled at a time and the scheduler $P$ is the MAF one, the packet with generation time $S$ must be generated from the source with maximum age $\Delta_{[1],P}$, call it source $l^*$. In other words, the age of source $l^*$ is reduced from the maximum age $\Delta_{[1],P}$ to the minimum age $\Delta_{[m],P}'=t-S$, and the age of the other $(m-1)$ sources	remain unchanged. Hence, 
\begin{equation}\label{prlem2thm1eq1}
\begin{split}
&\Delta_{[i],P}'=\Delta_{[i+1],P}, i=1,\ldots,m-1,\\
&\Delta_{[m],P}'=t-S.
\end{split}
\end{equation}
In the scheduler $\pi$, this packet can be generated from any source. Thus, for all cases of scheduler $\pi$, it must hold that
\begin{equation}\label{prlem2thm1eq2}
\Delta_{[i],\pi}'\geq\Delta_{[i+1],\pi}, i=1,\ldots,m-1.
\end{equation}
By combining \eqref{lem2thm1eq1}, \eqref{prlem2thm1eq1}, and \eqref{prlem2thm1eq2}, we have
\begin{equation}\label{prlem2thm1eq3}
\Delta_{[i],\pi}'\geq\Delta_{[i+1],\pi}\geq\Delta_{[i+1],P}=\Delta_{[i],P}', i=1,\ldots,m-1.
\end{equation}
In addition, since the same packet is also delivered under the scheduler $\pi$, the source from which this packet is generated under policy $\pi$ will have the minimum age after the delivery, i.e., we have
\begin{equation}\label{prlem2thm1eq4}
\Delta_{[m],\pi}'=t-S=\Delta_{[m],P}'.
\end{equation}
By this, \eqref{lem2thm1eq2} is proven. 
\end{proof}
\begin{proof}[Proof of Lemma \ref{lem1thm1}]
Using the coupling between the system state processes, and for any given sample path of the packet service times, we consider two cases:

\emph{Case 1:} When there is no packet delivery, the age of each source grows linearly with a slope 1.

\emph{Case 2:} When a packet is delivered, the ages of the sources evolve according to Lemma \ref{lem2thm1}.

By induction over time, we obtain
\begin{equation}
\Delta_{[i],P}(t)\leq \Delta_{[i],\pi}(t), i=1,\ldots,m, t\geq 0.
\end{equation}
Hence, \eqref{Thm1eq1} follows which implies \eqref{lem1thm1eq1} by Theorem 6.B.30 of \cite{shaked2007stochastic}. This completes the proof.
\end{proof}
\begin{proof}[Proof of Proposition \ref{Thm1}]
Since the Ta-APD and Ta-AP for any scheduling policy $\pi$ are the expectation of non-decreasing functional of the process $\{\mathbf{\Delta}_\pi(t), t\geq 0\}$, \eqref{lem1thm1eq1} implies \eqref{Thm1eq1b} and \eqref{Thm1eq2b} using the properties of stochastic ordering \cite{shaked2007stochastic}. This completes the proof. 
\end{proof}

\ifreport 
\section{Proof of Theorem \ref{thm_tapa}}\label{Appendix_A'}
The optimality of the MAF scheduler follows from Proposition \ref{Thm1}. Now, we need to show the optimality of the zero-wait sampler. We need to show that the Ta-APD is an increasing function of the packets waiting times $Z_i$'s. Define $K_{li}$ as the number of packets  that have been transmitted since the last received service by source $l$ before time $D_i$.  Also, let $\gamma_l$ be the index of the first delivered packet from source $l$. 

For $i>\gamma_l$, the last service that source $l$ has received before time $D_i^-$ was at time $D_{i-K_{li}}$. Since the age process increases linearly with time when there is no packet delivery, we have
\begin{equation}
\Delta_{l}(D_i^{-})=D_i-D_{i-K_{li}}+Y_{i-K_{li}}, ~i>\gamma_l,
\end{equation}
where  $Y_{i-K_{li}}$ is the service time of packet $i-K_{li}$. Note that $Y_{i-K_{li}}$ is also the age value of source $l$ at time $D_{i-K_{li}}$, i.e., $\Delta_{l}(D_{i-K_{li}})=Y_{i-K_{li}}$. Note that $D_{i}=Y_{i}+Z_{i-1}+D_{i-1}$. Repeating this, we can express $(D_{i}-D_{i-K_{li}})$ in terms of $Z_i$'s and $Y_i$'s, and hence we get
\begin{equation}\label{ari_interms_ziandyi}
\Delta_{l}(D_i^{-})=\sum_{k=0}^{K_{li}}Y_{i-k}+\sum_{k=1}^{K_{li}}Z_{i-k}, ~i>\gamma_l.
\end{equation}
For example, in Fig. \ref{fig:ages_evolv}, we have $\Delta_{2}(D_2^{-})=Y_1+Z_1+Y_2$.

For $i \leq\gamma_l$, $\Delta_{l}(D_i^{-})$ is simply the initial age value of source $l$ ($\Delta_{l}(0)$) plus the length of the time interval $[0,D_i)$. Hence, we have
\begin{equation}\label{ari_interms_ziandyi_2}
\Delta_{l}(D_i^{-})=\Delta_{l}(0)+D_i,~ i\leq\gamma_l.
\end{equation}
Again using $D_{i}=Y_{i}+Z_{i-1}+D_{i-1}$ and the fact that $D_0=0$, we get
\begin{equation}\label{ari_interms_ziandyi_2}
\Delta_{l}(D_i^{-})=\Delta_{l}(0)+\sum_{k=1}^{i}Y_{k}+\sum_{k=0}^{i}Z_{k}, ~ i\leq\gamma_l.
\end{equation}
In Fig. \ref{fig:ages_evolv}, For example, we have $\Delta_{1}(D_1^{-})=\Delta_{1}(0)+Z_0+Y_1$. 

Substituting \eqref{ari_interms_ziandyi} and \eqref{ari_interms_ziandyi_2} into \eqref{peak_age_def}, we get
\begin{equation}\label{peak_age_inc_eq1}
\begin{split}
\Delta_{\text{avg-D}}(\pi,f)=\limsup_{n\rightarrow\infty}\frac{1}{n}\mathbb{E}\Biggl[\sum_{l=1}^m\sum_{i=1}^{\gamma_l}g\left(\Delta_{l}(0)+\sum_{k=1}^{i}Y_{k}+\sum_{k=0}^{i}Z_{k}\right)+\sum_{i=\gamma_l}^{n}g\left(\sum_{k=0}^{K_{li}}Y_{i-k}+\sum_{k=1}^{K_{li}}Z_{i-k}\right)\Biggr].
\end{split}
\end{equation}
Since the function $g(\cdot)$ is non-decreasing,  \eqref{peak_age_inc_eq1} implies that the Ta-APD is a non-decreasing function of the waiting times. This completes the proof. \qed

\section{Proof of Lemma \ref{pro_simple}}\label{Appendix_B}
Part (i) is proven in two steps:

\emph{Step 1:} We will prove that $\bar{\Delta}_{\text{avg-opt}}\leq \beta$ if and only if $\Theta(\beta)\leq 0$. If $\bar{\Delta}_{\text{avg-opt}}\leq \beta$, there exists a sampling policy $f=(Z_0,Z_1,\ldots)\in\mathcal{F}$ that is feasible for \eqref{optimal_eq_sampler} and \eqref{equivilent_optimal_sampler}, which satisfies
\begin{equation}\label{lem3_simp_eq1}
\limsup_{n\rightarrow\infty}\frac{\sum_{i=0}^{n-1}\mathbb{E}\left[\sum_{l=1}^m\int_{a_{li}}^{a_{li}+Z_i+Y_{i+1}}g(\tau)d\tau\right]}{\sum_{i=0}^{n-1}\mathbb{E}[Z_i+Y_{i+1}]}\leq \beta.
\end{equation}
Hence,
\begin{equation}\label{lem3_simp_eq2}
\!\limsup_{n\rightarrow\infty}\frac{\!\!\frac{1}{n}\!\sum_{i=0}^{n\!-\!1}\!\mathbb{E}\!\left[\sum_{l=1}^m\!\int_{a_{li}}^{a_{li}\!+\!Z_i\!+\!Y_{i\!+\!1}}\!\!g(\tau)d\tau\!-\!\beta(Z_i\!+\!Y_{i\!+\!1})\right]\!}{\!\frac{1}{n}\sum_{i=0}^{n\!-\!1}\mathbb{E}[Z_i\!+\!Y_{i\!+\!1}]}\!\leq\! 0.
\end{equation} 
Since $Z_i$'s and $Y_i$'s are bounded and positive and $\mathbb{E}[Y_i]>0$ for all $i$, we have $0<\liminf_{n\rightarrow\infty}$ $\frac{1}{n}\sum_{i=0}^{n-1}\mathbb{E}[Z_i+Y_{i+1}]\leq \limsup_{n\rightarrow\infty}$ $\frac{1}{n}\sum_{i=0}^{n-1}\mathbb{E}[Z_i+Y_{i+1}]\leq q$ for some $q\in\mathbb{R}^{+}$. By this, we get
\begin{equation}\label{lem3_simp_eq3}
\limsup_{n\rightarrow\infty}\frac{1}{n}\!\sum_{i=0}^{n\!-\!1}\!\mathbb{E}\!\left[\sum_{l=1}^m\!\int_{a_{li}}^{a_{li}\!+\!Z_i\!+\!Y_{i\!+\!1}}\!\!g(\tau)d\tau\!-\!\beta(Z_i\!+\!Y_{i\!+\!1})\right]\leq 0.
\end{equation}
Therefore, $\Theta(\beta)\leq 0$.

In the reverse direction, if $\Theta(\beta)\leq 0$, then  there exists a sampling policy $f=(Z_0,Z_1,\ldots)\in\mathcal{F}$ that is feasible for \eqref{optimal_eq_sampler} and \eqref{equivilent_optimal_sampler}, which satisfies \eqref{lem3_simp_eq3}. Since we have $0<\liminf_{n\rightarrow\infty}$ $\frac{1}{n}\sum_{i=0}^{n-1}\mathbb{E}[Z_i+Y_{i+1}]\leq\limsup_{n\rightarrow\infty}$ $\frac{1}{n}\sum_{i=0}^{n-1}\mathbb{E}[Z_i+Y_{i+1}]\leq q$, we can divide \eqref{lem3_simp_eq3} by $\liminf_{n\rightarrow\infty}\frac{1}{n}$ $\sum_{i=0}^{n-1}\mathbb{E}[Z_i+Y_{i+1}]$ to get \eqref{lem3_simp_eq2}, which implies \eqref{lem3_simp_eq1}. Hence, $\bar{\Delta}_{\text{avg-opt}}\leq \beta$. By this, we have proven that  $\bar{\Delta}_{\text{avg-opt}}\leq \beta$ if and only if $\Theta(\beta)\leq 0$.

\emph{Step 2:} We need to prove that $\bar{\Delta}_{\text{avg-opt}}< \beta$ if and only if $\Theta(\beta)< 0$. This statement can be proven by using the arguments in Step 1, in which ``$\leq$" should be replaced by ``$<$''. Finally, from the statement of Step 1, it immediately follows that $\bar{\Delta}_{\text{avg-opt}}> \beta$ if and only if $\Theta(\beta)> 0$. This completes part (i).

Part(ii): We first show that each optimal solution to \eqref{optimal_eq_sampler} is an optimal solution to \eqref{equivilent_optimal_sampler}. 
By the claim of part (i), $\Theta(\beta)=0$ is equivalent to $\bar{\Delta}_{\text{avg-opt}}=\beta$. Suppose that policy $f=(Z_0,Z_1,\ldots)\in\mathcal{F}$ is an optimal solution to \eqref{optimal_eq_sampler}. Then, $\Delta_{\text{avg}(\pi_{\text{MAF}},f)}=\bar{\Delta}_{\text{avg-opt}}=\beta$. Applying this in the arguments of \eqref{lem3_simp_eq1}-\eqref{lem3_simp_eq3}, we can show that policy $f$ satisfies
\begin{equation}
\limsup_{n\rightarrow\infty}\frac{1}{n}\!\sum_{i=0}^{n\!-\!1}\!\mathbb{E}\!\left[\sum_{l=1}^m\!\int_{a_{li}}^{a_{li}\!+\!Z_i\!+\!Y_{i\!+\!1}}\!\!g(\tau)d\tau\!-\!\beta(Z_i\!+\!Y_{i\!+\!1})\right]= 0.
\end{equation}
This and $\Theta(\beta)=0$ imply that policy $f$ is an optimal solution to \eqref{equivilent_optimal_sampler}.

Similarly, we can prove that each optimal solution to \eqref{equivilent_optimal_sampler} is an optimal solution to \eqref{optimal_eq_sampler}. By this, part (ii) is proven.\qed

\section{Proof of Proposition \ref{thm2}}\label{Appendix_C}
According to \cite[Proposition 4.2.1 and Proposition 4.2.6]{Bertsekas1996bookDPVol2}, it is enough to show that for every two states $\mathbf{s}$ and $\mathbf{s}'$, there exists a stationary deterministic policy $f$ such that for some $k$, we have
\begin{equation}\label{cond_eq_req1}
\mathbb{P}\left[\mathbf{s}(k)=\mathbf{s}'\vert\mathbf{s}(0)=\mathbf{s}, f\right] > 0.
\end{equation}
From the state evolution equation \eqref{state_evol}, we can observe that any state in $\mathcal{S}$ can be represented in terms of the waiting and service times. This implies \eqref{cond_eq_req1}. To clarify this, let us consider a system with 3 sources. Assume that the elements of state $\mathbf{s}'$ are as follows:
\begin{equation}
\begin{split}
&a_{[1]}'=y_3+z_2+y_2+z_1+y_1,\\
&a_{[2]}'=y_3+z_2+y_2,\\
&a_{[3]}'=y_3,
\end{split}
\end{equation}
where $y_i$'s and $z_i$'s are any arbitrary elements in $\mathcal{Y}$ and $\mathcal{Z}$, respectively. Then, we will show that from any arbitrary state $\mathbf{s}=(a_{[1]},a_{[2]},a_{[3]})$, a sequence of service and waiting times can be followed to reach state $\mathbf{s}'$. If we have $Z_0=z_1$, $Y_1=y_1$, $Z_1=z_1$, $Y_2=y_2$, $Z_2=z_2$, and $Y_3=y_3$, then according to \eqref{state_evol}, we have in the first stage
\begin{equation}
\begin{split}
&a_{[1]1}=a_{[2]}+z_1+y_1,\\
&a_{[2]1}=a_{[3]}+z_1+y_1,\\
&a_{[3]1}=y_1,
\end{split}
\end{equation}
and in the second stage, we have
\begin{equation}
\begin{split}
&a_{[1]2}=a_{[3]}+z_1+y_2+z_1+y_1,\\
&a_{[2]2}=y_2+z_1+y_1,\\
&a_{[3]2}=y_2,
\end{split}
\end{equation}
and in the third stage, we have
\begin{equation}
\begin{split}
&a_{[1]3}=y_3+z_2+y_2+z_1+y_1=a_{[1]}',\\
&a_{[2]3}=y_3+z_2+y_2=a_{[2]}',\\
&a_{[3]3}=y_3=a_{[3]}'.
\end{split}
\end{equation}
Hence, a stationary deterministic policy $f$ can be designed to reach state $\mathbf{s}'$ from state $\mathbf{s}$ in 3 stages, if the aforementioned sequence of service times occurs. This implies that
\begin{equation}
\mathbb{P}\left[\mathbf{s}(3)=\mathbf{s}'\vert\mathbf{s}(0)=\mathbf{s}, f\right] =\prod_{i=1}^3 \mathbb{P}(Y_i=y_i)>0,
\end{equation}
where we have used that $Y_i$'s are \emph{i.i.d.}\footnote{We assume that all elements in $\mathcal{Y}$ have a strictly positive probability, where the elements with zero probability can be removed without affecting the proof.} The previous argument can be generalized to any number of sources. In particular, a forward induction over $m$ can be used to show the result, where \eqref{cond_eq_req1} trivially holds for $m=1$, and the previous argument can be used to show that \eqref{cond_eq_req1} holds for any general $m$. This completes the proof. \qed

\section{Proof of Proposition \ref{th_thm}}\label{Appendix_E}
We prove Proposition \ref{th_thm} into two steps:

\textbf{Step 1}: We first show that $h(\mathbf{s})$ is non-decreasing in $\mathbf{s}$. 
To do so, we show that $J_\alpha(\mathbf{s})$, defined in \eqref{j_alpha}, is non-decreasing in $\mathbf{s}$, which together with \eqref{relative_cost_eq} imply that $h(\mathbf{s})$ is non-decreasing in $\mathbf{s}$.

Given an initial state $\mathbf{s}(0)$, the total expected discounted cost under a sampling policy $f\in\mathcal{F}$ is given by
\begin{equation}
J_\alpha(\mathbf{s}(0); f)=\limsup_{n\rightarrow\infty}\mathbb{E}\left[ \sum_{i=0}^{n-1}\alpha^iC(\mathbf{s}(i), Z_i)\right],
\end{equation}
where $0<\alpha<1$ is the discount factor. The optimal total expected $\alpha$-discounted cost function is defined by
\begin{equation}
J_\alpha(\mathbf{s})=\min_{f\in\mathcal{F}}J_\alpha(\mathbf{s}; f),~\mathbf{s}\in\mathcal{S}.
\end{equation}
A policy is said to be $\alpha$-optimal if it minimizes the total expected $\alpha$-discounted cost. The discounted cost optimality equation of $J_\alpha(\mathbf{s})$ is discussed below.
\begin{proposition}
The optimal total expected $\alpha$-discounted cost $J_\alpha(\mathbf{s})$ satisfies
\begin{equation}\label{optimality_cond}
J_\alpha(\mathbf{s})=\min_{z\in\mathcal{Z}} C(\mathbf{s}, z)+\alpha\sum_{\mathbf{s}'\in\mathcal{S}}\mathbb{P}_{\mathbf{s}\mathbf{s}'}(z)J_\alpha(\mathbf{s}').
\end{equation}
Moreover, a stationary deterministic policy that attains the minimum in equation \eqref{optimality_cond} for each $\mathbf{s}\in\mathcal{S}$ will be an $\alpha$-optimal policy. Also, let $J_{\alpha,0}(\mathbf{s})=0$ for all $\mathbf{s}$ and any $n\geq 0$,
\begin{equation}\label{value_iter_disc}
J_{\alpha,n+1}(\mathbf{s})=\min_{z\in\mathcal{Z}} C(\mathbf{s}, z)+\alpha\sum_{\mathbf{s}'\in\mathcal{S}}\mathbb{P}_{\mathbf{s}\mathbf{s}'}(z)J_{\alpha,n}(\mathbf{s}').
\end{equation}
Then, we have $J_{\alpha,n}(\mathbf{s})\rightarrow J_{\alpha}(\mathbf{s})$ as $n\rightarrow\infty$ for every $\mathbf{s}$, and $\alpha$.
\end{proposition}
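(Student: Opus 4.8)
The plan is to treat this as the textbook discounted-cost dynamic programming result for a finite-state, finite-action MDP with uniformly bounded one-stage cost, and to write out the contraction argument that yields all three claims together. Define the Bellman operator $T$ on functions $J:\mathcal{S}\to\mathbb{R}$ by
\begin{equation}
(TJ)(\mathbf{s})=\min_{z\in\mathcal{Z}}\Big(C(\mathbf{s},z)+\alpha\sum_{\mathbf{s}'\in\mathcal{S}}\mathbb{P}_{\mathbf{s}\mathbf{s}'}(z)J(\mathbf{s}')\Big),
\end{equation}
and, for a stationary deterministic policy $g$, the operator $T_g$ obtained by evaluating at $z=g(\mathbf{s})$ instead of minimizing. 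Since $\mathcal{S}$ and $\mathcal{Z}$ are finite the minimum is attained, and since $\sum_{\mathbf{s}'\in\mathcal{S}}\mathbb{P}_{\mathbf{s}\mathbf{s}'}(z)=1$ a one-line estimate gives $\|TJ-TJ'\|_\infty\le\alpha\|J-J'\|_\infty$, and likewise for $T_g$, so both are $\alpha$-contractions on the complete (as $\mathcal{S}$ is finite) space of bounded real functions on $\mathcal{S}$ with the sup-norm. By the Banach fixed-point theorem $T$ has a unique fixed point $J^\star$, and $T^nJ\to J^\star$ for every $J$; starting from $J\equiv 0$ the iterates $T^nJ$ are exactly the value-iteration sequence $J_{\alpha,n}$ in \eqref{value_iter_disc}, so $J_{\alpha,n}\to J^\star$, which already proves the last sentence of the Proposition modulo the identification $J^\star=J_\alpha$.

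To identify $J^\star$ with $J_\alpha$, I would first show by a standard finite-horizon dynamic-programming induction that $J_{\alpha,n}(\mathbf{s})$ equals the optimal expected $\alpha$-discounted cost incurred over the first $n$ stages when starting from $\mathbf{s}$, the minimization being over all causal (history-dependent, possibly randomized) policies in $\mathcal{F}$; the base case $J_{\alpha,0}\equiv 0$ is immediate and the inductive step is the usual conditioning on the first action together with the Markov transition \eqref{trans_prob_eq}. Then, using the uniform bound $|C(\mathbf{s},z)|\le c$ established in the text, the cost accrued from stage $n$ onward under any policy is at most $\sum_{i=n}^{\infty}\alpha^i c=\alpha^n c/(1-\alpha)$ in absolute value; hence $|J_\alpha(\mathbf{s};f)-J_{\alpha,n}(\mathbf{s};f)|\le \alpha^n c/(1-\alpha)$ uniformly in $\mathbf{s}$ and $f$ (this also confirms that the $\limsup$ in the definition of $J_\alpha(\mathbf{s};f)$ is an ordinary limit), so $J_{\alpha,n}\to J_\alpha$ uniformly. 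Two limits of the same sequence must agree, giving $J_\alpha=J^\star$, i.e., $J_\alpha$ satisfies the optimality equation \eqref{optimality_cond}.

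For the optimal-policy claim, let $g$ be any stationary deterministic policy with $g(\mathbf{s})\in\argmin_{z\in\mathcal{Z}}\big(C(\mathbf{s},z)+\alpha\sum_{\mathbf{s}'\in\mathcal{S}}\mathbb{P}_{\mathbf{s}\mathbf{s}'}(z)J_\alpha(\mathbf{s}')\big)$, which exists since $\mathcal{Z}$ is finite. Then $T_gJ_\alpha=TJ_\alpha=J_\alpha$, so $J_\alpha$ is a fixed point of the $\alpha$-contraction $T_g$; since $J_\alpha(\cdot\,;g)$, the discounted cost of the fixed policy $g$, is also a fixed point of $T_g$ (it satisfies the one-step recursion for $g$, by the same tail estimate as above), uniqueness of the fixed point forces $J_\alpha(\cdot\,;g)=J_\alpha$, so $g$ is $\alpha$-optimal. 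I expect the only mildly delicate points to be the finite-horizon induction identifying $J_{\alpha,n}$ as the $n$-stage optimum over all causal policies, and the uniform tail estimate that turns the $\limsup$ into a limit — both routine and both resting on the boundedness of $C(\mathbf{s},z)$; the remainder is the standard contraction argument, and the statement could equally well be cited directly from the discounted-MDP theory in \cite{Bertsekas1996bookDPVol2,puterman2005markov}.
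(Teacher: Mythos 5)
Your proposal is correct and is essentially the same argument as the paper's: the paper simply observes that the cost per stage is bounded and cites the standard discounted-MDP results (Propositions 1.2.2 and 1.2.3 of \cite{Bertsekas1996bookDPVol2} and \cite{sennott1989average}), and your contraction/fixed-point development is precisely the textbook proof of those cited results, as you yourself note in your closing sentence. No gaps; the boundedness of $C(\mathbf{s},z)$ and the finiteness of $\mathcal{S}$ and $\mathcal{Z}$, which your argument rests on, are established in the paper's setup.
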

\begin{proof}
Since we have bounded cost per stage, the proposition follows directly from
\cite[Proposition 1.2.2 and Proposition 1.2.3]{Bertsekas1996bookDPVol2}, and \cite{sennott1989average}.
\end{proof}
Next, we use the optimality equation \eqref{optimality_cond} and the value iteration in \eqref{value_iter_disc} to prove that $J_\alpha(\mathbf{s})$ is non-decreasing in $\mathbf{s}$.
\begin{lemma}\label{lem2}
The optimal total expected $\alpha$-discounted cost function $J_\alpha(\mathbf{s})$ is non-decreasing in $\mathbf{s}$.
\end{lemma}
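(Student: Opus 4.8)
The plan is to prove Lemma~\ref{lem2} by induction on the value-iteration index $n$, showing that each $J_{\alpha,n}(\mathbf{s})$ is non-decreasing in $\mathbf{s}$ (in the componentwise order on $\mathcal{S}$), and then passing to the limit using $J_{\alpha,n}(\mathbf{s})\to J_\alpha(\mathbf{s})$. The base case is immediate since $J_{\alpha,0}(\mathbf{s})\equiv 0$. For the inductive step, I would fix two states $\mathbf{s}=(a_{[1]},\ldots,a_{[m]})$ and $\tilde{\mathbf{s}}=(\tilde a_{[1]},\ldots,\tilde a_{[m]})$ with $a_{[l]}\le \tilde a_{[l]}$ for all $l$, assume $J_{\alpha,n}$ is non-decreasing, and compare the two minimizations in \eqref{value_iter_disc}.

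The key observation is a monotone-coupling property of the dynamics in \eqref{state_evol}: if the same control $z$ and the same disturbance $y$ are applied, then the successor states satisfy $\mathbf{s}'(z,y)\le \tilde{\mathbf{s}}'(z,y)$ componentwise, because $a_{[m]}'=y=\tilde a_{[m]}'$ and $a_{[l]}'=a_{[l+1]}+z+y\le \tilde a_{[l+1]}+z+y=\tilde a_{[l]}'$ for $l<m$. Combined with the inductive hypothesis, this gives $J_{\alpha,n}(\mathbf{s}'(z,y))\le J_{\alpha,n}(\tilde{\mathbf{s}}'(z,y))$, and hence $\sum_{\mathbf{s}'}\mathbb{P}_{\mathbf{s}\mathbf{s}'}(z)J_{\alpha,n}(\mathbf{s}')\le \sum_{\mathbf{s}'}\mathbb{P}_{\tilde{\mathbf{s}}\mathbf{s}'}(z)J_{\alpha,n}(\mathbf{s}')$ for every fixed $z$. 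For the per-stage cost, I would show $C(\mathbf{s},z)\le C(\tilde{\mathbf{s}},z)$ for each fixed $z$: since $C(\mathbf{s},z)=\mathbb{E}_Y[\sum_l\int_{a_{[l]}}^{a_{[l]}+z+Y}g(\tau)d\tau-\bar{\Delta}_{\text{avg-opt}}(z+Y)]$, the $\beta$-term is identical in both, and each integral $\int_{a}^{a+z+y}g(\tau)d\tau$ is non-decreasing in the lower limit $a$ because $g$ is non-negative on the relevant range (indeed $g$ is non-decreasing, so $g(\tau)\ge g(a_{[l]})\ge g(0)$; if $g$ can be negative one argues instead directly that $\int_a^{a+w}g = G(a+w)-G(a)$ with $G'=g$ non-decreasing, i.e. $G$ convex, so the increment is non-decreasing in $a$). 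Adding the two inequalities termwise and taking the minimum over $z\in\mathcal{Z}$ on both sides yields $J_{\alpha,n+1}(\mathbf{s})\le J_{\alpha,n+1}(\tilde{\mathbf{s}})$, closing the induction.

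Finally, since $J_{\alpha,n}(\mathbf{s})\to J_\alpha(\mathbf{s})$ pointwise, the non-decreasing property is preserved in the limit, so $J_\alpha$ is non-decreasing in $\mathbf{s}$; together with \eqref{relative_cost_eq}, $h(\mathbf{s})=\lim_{\alpha\to1}(J_\alpha(\mathbf{s})-J_\alpha(\mathbf{o}))$ is non-decreasing in $\mathbf{s}$, completing Step~1. The main obstacle is handling the cost-monotonicity cleanly when $g$ is allowed to take negative values and is not assumed continuous; the convexity-of-the-antiderivative argument sketched above, which uses only that $g$ is non-decreasing, is the robust way around this, and it is worth stating explicitly rather than relying on non-negativity of $g$.
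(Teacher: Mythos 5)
Your proof is correct and follows essentially the same route as the paper's: induction on the value-iteration index, the componentwise monotone coupling $\mathbf{s}'(z,y)\le\tilde{\mathbf{s}}'(z,y)$ of successor states under a common $(z,y)$, monotonicity of the per-stage cost, and the fact that the minimum over $z$ preserves the ordering. Your explicit convexity-of-the-antiderivative argument for $C(\mathbf{s},z)\le C(\tilde{\mathbf{s}},z)$ is a welcome refinement of the paper's one-line assertion, since it correctly covers the case where $g$ may take negative values.
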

\begin{proof}
We use induction on $n$ in equation \eqref{value_iter_disc} to prove Lemma \ref{lem2}. Obviously, the result holds for $J_{\alpha,0}(\mathbf{s})$.

Now, assume that $J_{\alpha,n}(\mathbf{s})$ is non-decreasing in $\mathbf{s}$. We need to show that for any two states $\mathbf{s}_1$ and $\mathbf{s}_2$ with $\mathbf{s}_1\leq \mathbf{s}_2$, we have $J_{\alpha,n+1}(\mathbf{s}_1)\leq J_{\alpha,n+1}(\mathbf{s}_2)$. First, we note that, since the age-penalty function $g(\cdot)$ is non-decreasing, the expected cost per stage $C(\mathbf{s},z)$ is non-decreasing in $\mathbf{s}$, i.e., we have
\begin{equation}\label{pf1}
C(\mathbf{s}_1,z)\leq C(\mathbf{s}_2,z).
\end{equation}
From the state evolution equation \eqref{state_evol} and the transition probability equation \eqref{trans_prob_eq}, the second term of the right-hand side (RHS) of \eqref{value_iter_disc} can be rewritten as
\begin{equation}
\sum_{\mathbf{s}'\in\mathcal{S}}\mathbb{P}_{\mathbf{s}\mathbf{s}'}(z)J_{\alpha,n}(\mathbf{s}')=\sum_{y\in\mathcal{Y}}\mathbb{P}(Y=y)J_{\alpha,n}(\mathbf{s}'(z,y)),
\end{equation}
where $\mathbf{s}'(z,y)$ is the next state from state $\mathbf{s}$ given the values of $z$ and $y$. Also, according to the state evolution equation \eqref{state_evol}, if the next states of $\mathbf{s}_1$ and $\mathbf{s}_2$ for given values of $z$ and $y$ are $\mathbf{s}'_1(z,y)$ and $\mathbf{s}'_2(z,y)$, respectively, then we have $\mathbf{s}'_1(z,y)\leq \mathbf{s}'_2(z,y)$. This implies that
\begin{equation}\label{pf2}
\sum_{y\in\mathcal{Y}}\mathbb{P}(Y=y)J_{\alpha,n}(\mathbf{s}'_1(z,y))\leq \sum_{y\in\mathcal{Y}}\mathbb{P}(Y=y)J_{\alpha,n}(\mathbf{s}'_2(z,y)),
\end{equation}
where we have used the induction assumption that  $J_{\alpha,n}(\mathbf{s})$ is non-decreasing in $\mathbf{s}$. Using \eqref{pf1}, \eqref{pf2}, and the fact that the minimum operator in \eqref{value_iter_disc} retains the non-decreasing property, we conclude that 
\begin{equation}
J_{\alpha,n+1}(\mathbf{s}_1)\leq J_{\alpha,n+1}(\mathbf{s}_2).
\end{equation}
This completes the proof.
\end{proof}

\textbf{Step 2:} We use Step 1 to prove Proposition \ref{th_thm}. From Step 1, we have that $h(\mathbf{s})$ is non-decreasing in $\mathbf{s}$. Similar to Step 1, this implies that the second term of the right-hand side (RHS) of \eqref{bell1'} ($\sum_{\mathbf{s}'\in\mathcal{S}}\mathbb{P}_{\mathbf{s}\mathbf{s}'}(z)h(\mathbf{s}')$) is non-decreasing in $\mathbf{s}'$. Moreover, from the state evolution \eqref{state_evol}, we can notice that, for any state $\mathbf{s}$, the next state $\mathbf{s}'$ is increasing in $z$. This argument implies that the second term of the right-hand side (RHS) of \eqref{bell1'} ($\sum_{\mathbf{s}'\in\mathcal{S}}\mathbb{P}_{\mathbf{s}\mathbf{s}'}(z)h(\mathbf{s}')$) is increasing in $z$. Thus, the value of $z\in\mathcal{Z}$ that achieves the minimum value of this term is zero. If, for a given state $\mathbf{s}$,
the value of $z\in\mathcal{Z}$ that achieves the minimum value of the cost function $C(\mathbf{s},z)$ is zero, then $z=0$ solves the RHS of \eqref{bell1'}. In the sequel, we obtain the condition on $\mathbf{s}$ under which $z=0$ minimizes the cost function $C(\mathbf{s},z)$.

Now, we focus on the cost function $C(\mathbf{s},z)$.
In order to obtain the optimal $z$ that minimizes this cost function, we need to obtain the one-sided derivative of it. The one-sided derivative of a function $q$ in the direction of $\omega$ at $z$ is given by
\begin{equation}
\delta q(z;\omega)\triangleq \lim_{\epsilon\to 0^+}\frac{q(z+\epsilon\omega)-q(z)}{\epsilon}.
\end{equation}
Let $r(\mathbf{s}, z, Y)=\sum_{l=1}^m\int_{a_{[l]}}^{a_{[l]}+z+Y}g(\tau)d\tau$. 
Since $r(\mathbf{s}, z, Y)$ is the sum of integration of a non-decreasing function $g(\cdot)$, it is easy to show that $r(\mathbf{s}, z, Y)$ is convex. According to \cite[Lemma 4]{SunJournal2016}, the function $q(z)=\mathbb{E}_Y\left[r(\mathbf{s}, z, Y)\right]$ is convex as well. Hence, the one-sided derivative $\delta q(z;\omega)$ of $q(z)$ exists \cite[p.709]{Bertsekas}. Moreover, since $z\to r(\mathbf{s}, z, Y) $ is convex, the function $\epsilon\to [r(\mathbf{s}, z+\epsilon\omega, Y)-r(\mathbf{s}, z, Y)]/\epsilon$ is non-decreasing and bounded from above on $(0,\theta]$ for some $\theta>0$ \cite[Proposition 1.1.2(i)]{butnariu2000totally}. Using the monotone convergence theorem \cite[ Theorem 1.5.6]{durrett2010probability}, we can interchange the limit and integral operators in $\delta q(z;\omega)$ such that 
\begin{align}
\delta q(z;\omega)=&\lim_{\epsilon\to 0^+}\frac{1}{\epsilon}\mathbb{E}_Y[r(\mathbf{s}, z+\epsilon\omega, Y)-r(\mathbf{s}, z, Y)]\nonumber\\
=&~ \mathbb{E}_Y\left[\lim_{\epsilon\to 0^+}\frac{1}{\epsilon}\{r(\mathbf{s}, z+\epsilon\omega, Y)-r(\mathbf{s}, z, Y)\}\right]\nonumber\\
=&~ \mathbb{E}_Y\left[\lim_{t\to z^+}\sum_{l=1}^m g(a_{[l]}+t+Y)w\mathbbm{1}_{\{\omega>0\}}\right.\nonumber\left.+\lim_{t\to z^-}\sum_{l=1}^mg(a_{[l]}+t+Y)w\mathbbm{1}_{\{\omega<0\}}\right]
\nonumber
\\
=& \lim_{t\to z^+}\mathbb{E}_Y\left[\sum_{l=1}^mg(a_{[l]}+t+Y)w\mathbbm{1}_{\{\omega>0\}}\right]\nonumber+\lim_{t\to z^-}\mathbb{E}_Y\left[\sum_{l=1}^mg(a_{[l]}+t+Y)w\mathbbm{1}_{\{\omega<0\}}\right],
\end{align}
where $\mathbbm{1}_E$ is the indicator function of event $E$. According to  \cite[p.710]{Bertsekas} and the convexity of $q(z)$, $z$ is optimal to the cost function $C(\mathbf{s},z)$ if and only if
\begin{equation}\label{eqder_74}
\delta q(z;\omega)-\bar{\Delta}_{\text{avg-opt}}\omega\ge 0, ~\forall \omega\in\mathbb{R}.
\end{equation}
As $\omega$ in \eqref{eqder_74} is an arbitrary real number, considering $\omega=1$, \eqref{eqder_74} becomes
\begin{equation}\label{eqder_75}
\lim_{t\to z^+}\mathbb{E}_Y\left[\sum_{l=1}^mg(a_{[l]}+t+Y)\right]-\bar{\Delta}_{\text{avg-opt}}\geq 0.
\end{equation}
Likewise, considering $\omega=-1$, \eqref{eqder_74} implies
\begin{equation}\label{eqder_76}
\lim_{t\to z^-}\mathbb{E}_Y\left[\sum_{l=1}^mg(a_{[l]}+t+Y)\right]-\bar{\Delta}_{\text{avg-opt}}\leq 0.
\end{equation}
Since $g(\cdot)$ is non-decreasing, we get from \eqref{eqder_74}-\eqref{eqder_76} that $z$ must satisfy
\begin{align}
&\mathbb{E}_Y\left[\sum_{l=1}^mg(a_{[l]}+t+Y)\right]-\bar{\Delta}_{\text{avg-opt}}\geq 0,~ \text{if}~t>z,\label{eqder_77}\\&
\mathbb{E}_Y\left[\sum_{l=1}^mg(a_{[l]}+t+Y)\right]-\bar{\Delta}_{\text{avg-opt}}\leq 0,~ \text{if}~t<z.\label{eqder_78}
\end{align}
Subsequently, the smallest $z$ that satisfies \eqref{eqder_77}-\eqref{eqder_78} is
\begin{equation}\label{eqder_79}
\!\!\!\!z=\inf\left\lbrace t\geq 0 : \mathbb{E}_Y\left[\sum_{l=1}^mg(a_{[l]}+t+Y)\right]\geq\bar{\Delta}_{\text{avg-opt}}\right\rbrace. 
\end{equation}
According to \eqref{eqder_79}, Since $g(\cdot)$ is non-decreasing, if $\mathbb{E}_Y\left[\sum_{l=1}^mg(a_{[l]}+Y)\right]\geq\bar{\Delta}_{\text{avg-opt}}$, then $z=0$ minimizes $C(\mathbf{s},z)$. This completes the proof. \qed

\section{Proof of Theorem \ref{zero_wait_samp_optimalty}}\label{Appendix_E'}
We use the threshold test $A_s\geq (\bar{\Delta}_{\text{avg-opt}}-m\mathbb{E}[Y])$, in Proposition \ref{prop_9_special_case}, to prove Theorem \ref{zero_wait_samp_optimalty}.  We will show that the condition in \eqref{zero_wait_cond_32} implies that $A_s\geq (\bar{\Delta}_{\text{avg-opt}}-m\mathbb{E}[Y])$ holds for  all states $\mathbf{s}\in\mathcal{S}$, and hence the zero-wait sampler is optimal under this condition. From the state evolution \eqref{state_evol}, we can deduce that for any state $\mathbf{s}\in\mathcal{S}$, we have
\begin{align}
a_{[l]}\geq (m-l+1)y_{\text{inf}},~\forall l=1,\ldots,m.
\end{align}
This implies
\begin{align}
A_s\geq \sum_{l=1}^m ly_{\text{inf}}=\frac{m(m+1)}{2}y_{\text{inf}}, ~\forall \mathbf{s}\in\mathcal{S}.
\end{align}
Moreover, it is easy to show that the total-average age of the zero-wait sampler, when the scheduling policy is fixed to the MAF scheduler, is given by
\begin{align}
\bar{\Delta}_0=\frac{\frac{m(m+1)}{2}\mathbb{E}[Y]^2+\frac{m}{2}\mathbb{E}[Y^2]}{\mathbb{E}[Y]}.
\end{align}
Since $\bar{\Delta}_0\geq\bar{\Delta}_{\text{avg-opt}}$, we have
\begin{align}
\bar{\Delta}_0-m\mathbb{E}[Y]\ge\bar{\Delta}_{\text{avg-opt}}-m\mathbb{E}[Y].
\end{align}
Hence, if the following condition holds
\begin{align}
\frac{m(m+1)}{2}y_{\text{inf}}\geq \frac{\frac{m(m+1)}{2}\mathbb{E}[Y]^2+\frac{m}{2}\mathbb{E}[Y^2]}{\mathbb{E}[Y]}-m\mathbb{E}[Y],
\end{align}
which is equivalent to 
\begin{align}
y_{\text{inf}}\geq\frac{(m-1)\mathbb{E}[Y]^2+\mathbb{E}[Y^2]}{(m+1)\mathbb{E}[Y]},
\end{align}
then we have $A_s\geq (\bar{\Delta}_{\text{avg-opt}}-m\mathbb{E}[Y])$ for all states $\mathbf{s}\in\mathcal{S}$. This implies that the zero-wait sampler is optimal under this condition. This completes the proof. \qed
\fi
\bibliographystyle{IEEEbib}
\bibliography{MyLib}
\end{document}